\documentclass[thm-restate]{lipics-v2021}

\hideLIPIcs
\nolinenumbers

\usepackage{amsmath}
\usepackage{amsfonts}
\usepackage{mathtools}
\usepackage{graphicx}
\usepackage{color}

\usepackage[linesnumbered,lined,commentsnumbered,noend,boxed]{algorithm2e}

\usepackage{booktabs}

\ifpdftex 
    \usepackage[T1]{fontenc} 
\fi
\usepackage{lmodern}  
\usepackage{romanbar} 

\selectfont

\DeclareEmphSequence{\slshape,\itshape}

\usepackage{microtype}           
\DisableLigatures[-]{family=tt*} 
\usepackage{csquotes}            

\usepackage[capitalise,nameinlink,noabbrev]{cleveref}
\crefformat{equation}{#2(#1)#3}
\Crefformat{equation}{#2(#1)#3}

\usepackage[textsize=tiny, color=cyan]{todonotes}

\newcommand{\hquad}{\hspace{0.5em}} 
\newcommand{\Hquad}{\hquad}
\newcommand{\heq}{\hquad=\hquad}
\newcommand{\hle}{\hquad\le\hquad}
\newcommand{\hge}{\hquad\ge\hquad}
\newcommand{\hop}[1]{\hquad#1\hquad}
\newcommand{\qeq}{\quad=\quad}
\newcommand{\qle}{\quad\le\quad}
\newcommand{\qge}{\quad\ge\quad}

\usepackage{xspace}
\newcommand*{\algo}[1]{\texttt{\textup{#1}}\xspace}
\newcommand*{\sbcseries}{\sffamily\fontseries{sbc}\selectfont} 
\newcommand*{\sfefamily}{\fontfamily{lmssq}\selectfont}
\DeclareTextFontCommand{\textsbc}{\sbcseries}
\DeclareTextFontCommand{\textsfe}{\sfefamily}

\newcommand*{\mfrac}[2]{{\text{\footnotesize $\frac{#1}{#2}$}}}

\title{Graph Scheduling with Group Completion Times}

\author{Lars Rohwedder}{University of Southern Denmark, Odense, Denmark}{rohwedder@imada.sdu.dk}{https://orcid.org/0000-0002-9434-4589}{Supported by Dutch Research Council (NWO) project “The Twilight Zone of Efficiency: Optimality of Quasi-Polynomial Time Algorithms” [grant number OCEN.W.21.268]}

\author{Leander Schnaars}{Technical University of Munich, Munich, Germany}{leander.schnaars@tum.de}{https://orcid.org/0009-0004-6658-9150}{Supported by the Deutsche Forschungsgemeinschaft (DFG, German Research Foundation) - GRK 2201/2 - Projektnummer 277991500}

\authorrunning{L. Rohwedder and L. Schnaars}

\ccsdesc{Theory of computation~Scheduling algorithms}

\keywords{Graph Scheduling, Approximation Algorithms, Iterated Rounding}

\Copyright{Lars Rohwedder, Leander Schnaars}

\date{}

\begin{document}

\maketitle

\begin{abstract}
In the Graph Scheduling problem we schedule a given multiset of edges on discrete time steps, such that at each step the set of edges forms a matching. The goal is to minimize the sum of weighted group completion times, where a group is a set of edges and it completes when the last edge has been scheduled. Two popular variants of this problem are Coflow Scheduling and Data Migration. Our main result is extending a recent iterated rounding approach from Coflow Scheduling, roughly corresponding to the bipartite case, to the general Graph Scheduling problem. This yields an essentially tight $(2+\epsilon)$-approximation for the asymptotic setting where $\algo{OPT}$ is assumed to be large. For this we rely on polyhedral techniques from general matching, namely odd-set inequalities, and graph theoretical results on edge colorings in multigraphs. The state-of-the-art approximation algorithm for Data Migration is a $(1 + \phi)$-approximation that improves when $\algo{OPT}$ is small.
Taking the best of this and our main result, we obtain an improvement of the approximation rate for Data Migration in any regime.
\end{abstract}

\section{Introduction}
In Graph Scheduling with group completion times, the input consists of a multigraph $G = (V,E)$ and jobs $J_1,\dots,J_B \subseteq E$ with weights $\omega_1,\dots,\omega_B \in \mathbb{R}_{\ge 0}$. In other words, every job consists of some set of edges, where these sets do not necessarily have to be disjoint. In each discrete time step, we are allowed to schedule a matching on the graph, that is to say a set of edges not sharing any vertex. A job is completed if all of its edges have been scheduled. The goal is to minimize the weighted sum of completion times. 
Group completion times can model many common objectives. If for example there is a single job containing every edge, then the problem is equivalent to makespan minimization, while if every edge belongs to a unique job, this represents min-sum edge coloring. The single job case is equivalent to edge coloring and thus can be solved in polynomial time on bipartite graphs using Kőnig's Theorem \cite{König1916}, while the general case is $\algo{NP}$-hard \cite{holyer1981_edgecolorhard}. For min-sum edge coloring, the best known approximation ratio for the general case is a $1.8661$-approximation shown by Halldórsson, Kortsatz, and Sviridenko \cite{Halldorsson2008-msec}. On bipartite graphs, the best known ratios are $\sqrt{2}$ by Gandhi and Mestre \cite{Gandhi2009-edgecolor} for the unweighted case and $1.796$ by Halldórsson, Kortsatz, and Shachnai \cite{Halldorsson2003-edgecolor} for the weighted case.

Graph Scheduling has been studied mainly in the context of Coflow Scheduling \cite{chowdhury_efficient_2014,ahmadi_scheduling_2017,agarwal_sincronia_2018,khuller_select_2019,fukunaga22,rohwedder2025_coflow} and Data Migration \cite{mestre2010,Gandhi2013_datamigration}. In Coflow Scheduling, the underlying graph is required to be bipartite and the groups are formed by a disjoint partition of the edge set. This models a data exchange task commonly found in applications such as distributed computing, where between computation stages the different servers have to exchange intermediate results. Most common distributed computing frameworks such as MapReduce, Spark, and Hadoop \cite{Dean2008, Spark2010, Hadoop2010} contain algorithms for such tasks. The problem has thus enjoyed wide attention from both a theoretical as well as an applied side. For several years, the best known guarantees were multiple different $4$-approximations \cite{ahmadi_scheduling_2017,Shafiee2018-ik, agarwal_sincronia_2018, fukunaga22}, while there is a factor $(2-\epsilon)$-approximation hardness assuming $\algo{P} \neq \algo{NP}$ \cite{Sachdeva2013}. Recently, the factor $4$ has been improved to $3.415$ in \cite{rohwedder2025_coflow}. Some authors have studied extensions to the online setting \cite{khuller_select_2019} and to a variant with matroids instead of matchings in each time step \cite{im19}.

In Data Migration the graph is unrestricted, but each job consists of exactly the edges adjacent to some vertex, which implies that every edge belongs to exactly two jobs. As the name implies, this problem models a general data migration setting in a network, where every vertex represents a server and each edge is a data transmission requirement. A server finishes its data transmission task if all adjacent edges have been served. The best known approximation ratio for this setting is $(1+\phi) \approx 2.618$ \cite{mestre2010}. There is an extension of the problem to arbitrary demands on each edge for which a $4.96$-approximation is known \cite{Gandhi2013_datamigration}, though this changes the setting in a fundamental way and thus requires completely different approaches.

As remarked by Fukunaga \cite{fukunaga22}, some of the $4$-approximation algorithms for Coflow Scheduling \cite{ahmadi_scheduling_2017,Shafiee2018-ik} also work for general graphs, hence the best so far known approximation ratio for Graph Scheduling is $4$. In \cite{rohwedder2025_coflow}, Rohwedder and Schnaars introduce iterative rounding techniques that achieve an essentially tight $(2+\epsilon)$-approximation for Coflow Scheduling in an asymptotic regime where $\algo{OPT} \gg \sum_{j \in [B]}\omega_j$. Our main result extends this to general Graph Scheduling:
\begin{restatable}{theorem}{mainschedulingalg}\label{thm:main_scheduling_alg}
Given a Graph Scheduling instance $G$ with cost vector $\omega \in \mathbb{R}^B_{\ge 0}$ and parameter $\epsilon > 0$, in polynomial time we can find a solution $S$ such that:
\begin{equation*}
    \omega(S) \qle(2+\epsilon)\cdot \algo{OPT} \hop{+} \mathcal{O}\Big(\mfrac{1}{\epsilon}\Big)\cdot \sum_{j \in [B]}\omega_j
\end{equation*}
\end{restatable}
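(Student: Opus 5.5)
We follow the iterated rounding scheme of \cite{rohwedder2025_coflow}, replacing the bipartite matching polytope by Edmonds' description of the general matching polytope, i.e.\ degree constraints plus odd-set inequalities, and replacing Kőnig's theorem by multigraph edge-colouring bounds. We first fix geometric time intervals $(\tau_{k-1},\tau_k]$ with $\tau_k \approx (1+\epsilon)^k$. We then write a time-indexed relaxation. It has variables $y_{j,k}\in[0,1]$ indicating that job $j$ is completed by $\tau_k$, and variables $x_{e,k}$ counting the copies of edge $e$ scheduled by $\tau_k$. The constraints require that every edge of $J_j$ is covered once $y_{j,k}=1$. They also require that the first $\tau_k$ steps can accommodate the edges done by $\tau_k$: for every vertex $v$ we need $x(\delta(v))_k\le \tau_k$, and for every odd set $U$ we need $x(E[U])_k \le \tau_k\lfloor |U|/2\rfloor$. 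The objective is $\sum_j \omega_j \sum_k (\tau_k-\tau_{k-1})(1-y_{j,k})$, which lower-bounds $\algo{OPT}$ up to a factor $1+\epsilon$. Separation over the odd-set constraints is polynomial (Padberg--Rao), so the LP can be solved.

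The core step is an iterated rounding argument in the style of \cite{rohwedder2025_coflow}. We repeatedly take an extreme point, fix integral $y$ variables, and drop constraints that are provably slack. The goal is an integral assignment of jobs to deadlines $\tau_k$ such that the edge sets $E_k$ (edges whose jobs are due by $\tau_k$) satisfy relaxed capacities. These relaxed capacities are $\deg_{E_k}(v)\le 2\tau_k+O(1/\epsilon)$ and the corresponding odd-set bound with the same additive slack. The objective must not increase. The counting argument bounds the number of tight, linearly independent constraints by the number of fractional variables. It needs a laminar (uncrossing) structure on the tight vertex and odd-set constraints, as in Edmonds' matching polytope proofs, so that a token argument shows some constraint can be dropped with only $O(1)$ additive overload per interval. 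The factor $2$ appears exactly as in the coflow case: a job fractionally finished at time $t$ is charged to a deadline at most about $2t$.

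Given the rounded assignment, we build the schedule interval by interval. The edges newly due at $\tau_k$ form a multigraph whose maximum degree $\Delta$ and odd-set density $\max_U |E[U]|/\lfloor |U|/2\rfloor$ are both at most $(1+O(\epsilon))\tau_k$ plus $O(1/\epsilon)$, once the geometric sum over earlier intervals is accounted for. We need a multigraph edge-colouring result: the chromatic index is at most $\max(\Delta,\lceil\text{odd density}\rceil)+O(\sqrt{\Delta})$ (Kahn, or Chen--Jing--Zang's Goldberg--Seymour bound $\max(\Delta+1,\lceil\Gamma\rceil)$), computable in polynomial time. This lets us schedule each batch of edges within the allotted length. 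The additive $+1$ or $O(1/\epsilon)$ terms per job contribute the $\mathcal{O}(1/\epsilon)\sum_j\omega_j$ term.

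The main obstacle is the iterated rounding step. There are exponentially many odd-set constraints, and to show that extreme points are sparse we must uncross the tight odd sets into a laminar family compatible with the time-indexed structure. We must also ensure that dropping a constraint costs only an additive constant in capacity, not a multiplicative factor. We would first try to restrict to a laminar family per time index and adapt the token-counting argument of the bipartite case, using that every tight odd set receives charge from at least two fractional job variables.
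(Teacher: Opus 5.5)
Your proposal uses the right ingredients (odd-set inequalities, uncrossing, Goldberg--Seymour colouring, and a factor $2$ from charging fractional completion at $t$ to about $2t$). The way you assemble them, however, leaves two genuine gaps.

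First, the iterated rounding is applied to the wrong variables. You round the job variables $y_{j,k}$ and expect that dropping a constraint with few fractional variables costs only $O(1)$. But a single job can contain $\Theta(\tau_k)$ edges at one vertex, for instance a job containing all of $\delta(v)$. Raising one fractional $y_{j,k}$ to $1$ after that vertex's constraint was dropped can then overload it by $\Theta(\tau_k)$, and no token argument repairs this. So the promised outcome (objective not increasing, capacities $2\tau_k+O(1/\epsilon)$) is asserted, not derived. Also, ``finished fractionally at $t$, charged to $2t$'' comes from an $\alpha$-point argument, not from iterated rounding. The paper separates the two losses.
\begin{itemize}
\item \cref{lemma:app:two_approx} first chooses deadlines by randomized ($\alpha$-point style) rounding of a time-indexed LP; the degree and odd-set constraints are downward closed. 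This gives $\sum_j\omega_jC_j\le 2\algo{OPT}-\sum_j\omega_j$, with \cref{lp:main_gg} feasible at exact block capacities.
\item Only then is iterated rounding applied, and to the edge-to-block variables $x_{e,b}$. Each of these contributes one unit to a degree. Dropping a vertex with fewer than $k=8$ fractional edges therefore costs at most $k-2=6$, and an odd set loses at most $2(k-1)\lfloor|U|/2\rfloor-1$.
\item Termination follows from $|V_b^{(f)}|\le\frac2k|E_b^{(f)}|$, $|\bigcup_bE_b^{(f)}|\le\frac12\sum_b|E_b^{(f)}|$, and the laminar bound $|\mathcal{O}_b|\le|V_b^{(f)}|-1$. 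That bound needs the observation that a tight odd set extending another one only by dropped vertices is linearly dependent.
\end{itemize}

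Second, your schedule construction does not give a $(1+O(\epsilon))$ loss. Even granting your bound $(1+O(\epsilon))\tau_k$ on the degree of each batch, processing batches one after another finishes batch $k$ only at about $\sum_{i\le k}\tau_i\approx\tau_k/\epsilon$. The bound itself is also unjustified. Your LP has only cumulative capacities $x(\delta(v))_k\le\tau_k$, which bound the newly due edges by roughly $2\tau_k$, not by the interval length $\approx\epsilon\tau_k$.

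Cumulative capacities do not even certify that deadlines can be met. On the path $x\,v\,u\,w\,y$ with $xv,wy$ due at $1$ and $vu,uw$ due at $2$, every cumulative degree bound holds, yet no fractional schedule meets the deadlines.

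What is missing is the block formulation. Each edge may be placed in any block up to its first deadline ($x_{e,b}=0$ for $b>b(e)$), with per-block degree and odd-set capacities. Deadlines are rounded up to multiples of $\tau$, so every block has length at least $\tau$ and at most $C'_j/\tau$ blocks end by $C'_j$. Each rounded block then has maximum degree at most $k_b\tau+6$ and density at most $k_b\tau+14$, so it can be coloured with $k_b\tau+14$ colours. This yields $\algo{GBF}^\tau(C_j)\le\frac{\tau+14}{\tau}C_j+\tau+14$, and $\tau=\Theta(1/\epsilon)$ gives the theorem.

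In your geometric scheme, per-interval additive losses would also accumulate over $\Theta(\log(\tau_k)/\epsilon)$ intervals. That is not an additive $O(1/\epsilon)$ per job, so your accounting for the $\mathcal{O}(1/\epsilon)\sum_j\omega_j$ term does not go through either.
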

For instances with $\algo{OPT} \gg \sum_{j \in [B]}\omega_j$, this gives a $(2+ \epsilon)$-approximation, which is essentially tight due to the $(2-\epsilon)$ lower bound. This form of approximation guarantee is not often seen in the literature, but it turns out that many existing algorithms actually have \emph{negative} additive terms in their guarantees when analyzing them more precisely. In \cite{rohwedder2025_coflow}, an approximation framework is presented which allows obtaining approximation ratios from combinations of multiple such algorithms' allocation guarantees. Their framework can be directly lifted to the more general setting discussed in this work, but we do not employ it here and show the relevant results directly. Two examples in which negative additive terms can be derived are the greedy allocation for Coflow Scheduling in \cite{rohwedder2025_coflow} and the local ratio procedure for Data Migration by \cite{mestre2010}, which we refine in \cref{sec:data_migration}. These multiplicative and positive and negative additive guarantees can then be combined to obtain improved approximation algorithms only containing a multiplicative term.

In \cref{sec:data_migration} we use the refined analysis of local ratio in combination with \cref{thm:main_scheduling_alg} to obtain such an improved approximation ratio for Data Migration:

\begin{restatable}{theorem}{dmapproxresult}\label{theorem:main_dm_result}
There is a $2.617$-approximation algorithm for Data Migration with unit sized edges.
\end{restatable}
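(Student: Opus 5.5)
The plan is to run two algorithms and output the cheaper schedule. The first is the algorithm of \cref{thm:main_scheduling_alg}, applied to the Data Migration instance viewed as a Graph Scheduling instance: each vertex $v$ defines the job $J_v=\delta(v)$ with weight $\omega_v$. The second is the local ratio algorithm of Mestre \cite{mestre2010}, for which I would prove a sharper guarantee with a \emph{negative} additive term:
\begin{equation*}
\omega(S_{\mathrm{LR}}) \qle (1+\phi)\cdot\algo{OPT} \hop{-} \delta\cdot\sum_{v}\omega_v
\end{equation*}
for some absolute constant $\delta>0$. We may discard isolated vertices, since they contribute nothing. Every other vertex $v$ completes no earlier than time $\deg(v)\ge 1$ in any schedule, so $\algo{OPT}\ge \sum_v \omega_v\deg(v)\ge \sum_v\omega_v$. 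This is what lets the additive terms be compared with $\algo{OPT}$.

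\textbf{Step 1: the refined local ratio bound.} This is the step I expect to be the main obstacle. I would follow Mestre's analysis line by line. There, each vertex's completion time is charged against a lower bound on $\algo{OPT}$ built from the weighted degrees processed in the local ratio stack. The first thing to look for is integrality slack the original analysis throws away. Candidates are the ceiling or ``$+1$'' in the time at which the vertex's last edge is colored, or the fact that completion times are at least $1$ while the fractional charging allows values near $0$. Each such unit of slack, kept per vertex, should yield a saving proportional to $\omega_v$. Concretely, I would aim for a bound of the form $C_v\le (1+\phi)\cdot(\text{charge of }v)-c$ with a fixed constant $c>0$. Summing with weights $\omega_v$ then gives the displayed inequality with $\delta=c$. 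If the per-vertex slack cannot be extracted directly, the fallback is to use the first layer of the local ratio decomposition: there every vertex has strictly positive residual weight, and the bound may be loose by at least a constant fraction of $\omega_v$.

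\textbf{Step 2: combining the two guarantees.} Write $W=\sum_v\omega_v$ and $t=W/\algo{OPT}\in(0,1]$. By \cref{thm:main_scheduling_alg}, the first algorithm gives cost at most $(2+\epsilon + C/\epsilon\cdot t)\algo{OPT}$ for an absolute constant $C$. By Step 1, the second gives cost at most $(1+\phi-\delta t)\algo{OPT}$. The minimum of the two is at most the maximum over $t$ of the smaller expression, which is attained where they are equal:
\begin{equation*}
t^\ast=\frac{\phi-1-\epsilon}{C/\epsilon+\delta}.
\end{equation*}
The resulting ratio is $1+\phi-\delta t^\ast$. Choosing $\epsilon$ a small constant, for instance $\epsilon=(\phi-1)/2$, makes $\delta t^\ast$ a positive constant depending only on $C$ and $\delta$. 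We have $1+\phi\approx 2.6180$, so the claimed $2.617$ requires $\delta t^\ast\ge 0.0011$ or so. If the constants from \cref{thm:main_scheduling_alg} and Step 1 are too weak for that, I would optimize $\epsilon$ explicitly. Both algorithms run in polynomial time, so taking the better of the two schedules is polynomial as well.
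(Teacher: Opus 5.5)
Your architecture is the paper's. You run $\algo{GBF}^\tau$ (via the proof of \cref{thm:main_scheduling_alg}) and Mestre's local ratio algorithm with a negative additive term, and return the cheaper schedule. Your crossing-point argument over $t=W/\algo{OPT}$ is equivalent to the paper's convex combination with $\lambda$ chosen so the additive terms cancel.

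The gap is Step 1, which carries the whole improvement. You give no mechanism for it, and you underestimate how strong it must be, because $2.617$ is not robust to $\delta$. You need the explicit bound from the proof of \cref{thm:main_scheduling_alg}, namely $2\frac{\tau+14}{\tau}\algo{OPT}+\bigl(\tau+14-\frac{\tau+14}{\tau}\bigr)W$; the $\mathcal{O}(1/\epsilon)$ in the statement is not enough. With it, your combination gives
\[
1+\phi-\delta\cdot\frac{\phi-1-28/\tau}{(\tau+14)(\tau-1)/\tau+\delta}.
\]
The ratio $\frac{\phi-1-28/\tau}{(\tau+14)(\tau-1)/\tau}$ is at most about $0.0030$ for every $\tau$, with the maximum near $\tau\approx 97$. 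Since $1+\phi-2.617\approx 0.00103$, you need $\delta$ of at least roughly $0.35$. Contrary to your closing remark, optimizing $\epsilon$ cannot compensate for a small $\delta$.

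Your fallback does not deliver this either. The first local ratio layer is supported on $\algo{UN}(u)$ for a single $u$, so any slack there is proportional to that layer's weight, not to $W$. Also, the local ratio guarantee is not a per-vertex inequality $C_v\le(1+\phi)\cdot(\text{charge})-c$. It holds per weight layer $\omega^\ell$, in the form $\omega^\ell(S)\le\algo{UB}(d,\omega^\ell)=\rho(d)\,\algo{LB}(d,\omega^\ell)$ with $\algo{LB}(d,\omega^\ell)\le\omega^\ell(\algo{OPT})$, so the saving has to be extracted at that level.

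The paper gets $\delta=\frac12$ in \cref{lemma:local_ratio_bound} by a blow-up argument:
\begin{itemize}
\item Take the degree sequence $d$ of length $\Delta$ in a layer. Replace each $d_i$ by $k$ copies of $kd_i$ to get $d'$ of length $\Delta'=k\Delta$, and take an optimal solution $(\omega',y',z')$ of \cref{lp:mestre_local_ratio} for $d'$.
\item Aggregate consecutive blocks of $k$ coordinates, with $\omega_i=k\sum_j\omega'_j$, into a feasible solution for $d$. Since $k(d_i+\Delta-1)=d'_j+\Delta'-k$ rather than $d'_j+\Delta'-1$, the aggregated objective equals the optimum for $d'$ minus $(k-1)\sum_j\omega'_j$.
\item This gives $\rho(d)\le\rho_{k\Delta}-\frac{k-1}{k}\sum_i\omega_i\le(1+\phi)-\frac{k-1}{k}\sum_i\omega_i$. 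The ``$-1$'' in $\algo{UB}$ is precisely the integrality slack you were looking for.
\item Comparing LP objectives via $\frac{d_i+\Delta-1}{2\Delta}\in[\frac12,\frac{2\Delta-1}{2\Delta}]$ gives $\sum_i\omega_i\ge\frac{\Delta}{2\Delta-1}\sum_i\omega^{\algo{LP}}_i$.
\item The normalization $\algo{LB}(d,\omega^{\algo{LP}})\ge 1$, plus rescaling to $\omega^{\algo{ALG}}$, then yields $\omega^{\algo{ALG}}(S)\le(1+\phi)\,\omega^{\algo{ALG}}(\algo{OPT})-\frac12\sum_i\omega^{\algo{ALG}}_i$.
\item Since the layers sum to $\omega$, this lifts to an additive $-\frac12\sum_v\omega_v$.
\end{itemize}
With $\tau=97$ (\cref{lemma:dm_tau_choice}) and $\lambda=\frac{21312}{21409}$ the additive terms cancel, leaving about $2.6165<2.617$.
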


While the main focus is the extension of the techniques from Coflow Scheduling to general graphs and other related settings, from \cref{thm:main_scheduling_alg} combined with known results for Graph Scheduling, an improved approximation algorithm for the general case also follows. See \cref{sec:improved_graph_sched_approx} for full details.

\begin{restatable}{theorem}{graphschedulingapprox}\label{theorem:graph_scheduling_approx}
There is a polynomial time $\frac{1104}{281} < 3.93$-approximation algorithm for Graph Scheduling.
\end{restatable}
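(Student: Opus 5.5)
The plan is to run two algorithms and output the cheaper schedule. The first is the algorithm of \cref{thm:main_scheduling_alg}. The second is one of the known $4$-approximations for general graphs \cite{ahmadi_scheduling_2017,Shafiee2018-ik}, which Fukunaga \cite{fukunaga22} observed also work on general graphs. The point is that \cref{thm:main_scheduling_alg} is strong when $\algo{OPT}\gg W:=\sum_{j\in[B]}\omega_j$, while the classical algorithm should be strictly better than $4$ when $\algo{OPT}$ is comparable to $W$. Since every job needs at least one time step, $\algo{OPT}\ge W$. So I parametrize by $x:=W/\algo{OPT}\in(0,1]$ and bound the worse case of $\min\{A_1(x),A_2(x)\}$ over $x$, where $A_1,A_2$ are the ratios guaranteed by the two algorithms.

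First, I would make the additive term in \cref{thm:main_scheduling_alg} explicit, i.e.\ find a concrete $c$ with $\omega(S)\le(2+\epsilon)\algo{OPT}+\frac{c}{\epsilon}W$. This should come from tracking the losses of the iterated rounding: rounding the time grid to geometric intervals, and the extra matchings needed to edge-color each integral block in a multigraph. This gives $A_1(x)=2+\epsilon+\frac{c}{\epsilon}x$. Optimizing $\epsilon$ separately for each $x$ gives $A_1(x)=2+2\sqrt{cx}$, and a suitable rational choice should produce the fraction $\frac{1104}{281}$.

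Second, and this is the main work, I would re-analyze the $4$-approximation to extract a negative additive term, i.e.\ show $\omega(S')\le 4\algo{OPT}-\beta W$ for some constant $\beta>0$. The algorithm orders jobs by their LP completion times $C^*_j$ and schedules edges greedily in that order. The standard argument bounds the completion of job $j$ by the time needed to clear the prefix load: the maximum degree in the prefix multigraph is at most about $2C^*_j$, and a coloring with roughly twice that many matchings suffices. The slack should come from the lower-order terms in this count. The LP gives $C^*_j\ge 1$ and the load bound is of the form $\Delta\le 2C^*_j-1$, since a job with LP completion time $t$ has fractional load at most $t$ up to the term $\frac12$ coming from its own first unit. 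Carrying the ``$-1$'' terms through the charging gives $C_j\le 4C^*_j-\beta$ per job, and summing against $\omega_j$ gives $-\beta W$. I expect this step to be the main obstacle: the degree bound in general multigraphs (Shannon/Vizing-type coloring rather than Kőnig) must be applied carefully so that its additive constants do not eat the slack. If the greedy analysis only yields $\beta$ for one specific LP, I would switch to that LP, e.g.\ time-indexed with odd-set constraints as in the main result.

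Finally, the combination. The overall ratio is $\max_{x\in(0,1]}\min\{2+2\sqrt{cx},\,4-\beta x\}$, or the fixed-$\epsilon$ version of $A_1$ if that is more convenient. The first expression increases in $x$ and the second decreases, so the maximum is attained where the two are equal. Solving this equation with the explicit $c$ and $\beta$ then gives the stated constant $\frac{1104}{281}<3.93$, since $4-\beta x^*<4$.
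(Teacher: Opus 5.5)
Your plan has the same structure as the paper's proof. You output the cheaper of the schedule from \cref{thm:main_scheduling_alg} and a classical $4$-approximation analyzed with a negative additive term. Your max--min over $x=W/\algo{OPT}$ is equivalent to the paper's convex combination: its weights $\lambda_1=\frac{17}{281}$, $\lambda_2=\frac{264}{281}$ are exactly those that cancel the additive terms, i.e.\ your crossing point.

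The gap is that this statement is purely quantitative, and you never fix either constant. Worse, your main parametrization would not produce the claimed number. The bound you want to ``make explicit'' is already in the proof of \cref{thm:main_scheduling_alg}: $\omega(S_1)\le 2\frac{\tau+14}{\tau}\algo{OPT}+(\tau+13-\frac{14}{\tau})W$. It comes from the $-W$ in \cref{lemma:app:two_approx} together with \cref{lemma:alg_delay_bound}; note that deadlines are rounded to multiples of $\tau$, not to geometric intervals. With $\epsilon=28/\tau$ the additive term is $(\frac{28}{\epsilon}+13-\frac{\epsilon}{2})W$, not $\frac{c}{\epsilon}W$, and the $\tau$-independent part $13W$ matters. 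A constant $c$ that dominates this term at the point where $2+2\sqrt{cx}$ is attained must be about $40$. That puts the optimum near $\epsilon\approx 0.965$ ($\tau\approx 29$), and the crossing with $4-3x$ lands at about $3.9303$, above even $3.93$. What works is your fixed-$\epsilon$ fallback with $\tau=34$, giving $A_1(x)=\frac{48}{17}+\frac{792}{17}x$.

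For the second algorithm you need $\beta=3$, which your two $-1$ terms already give once the greedy bound is stated correctly. The step you flag as the main obstacle is not one. Greedy list scheduling of edges in LP order places an edge $uv$ no later than $\deg(u)+\deg(v)-1$, with degrees taken in the prefix multigraph, because in every earlier slot $u$ or $v$ is busy with an earlier edge. This holds verbatim for multigraphs and needs no Shannon or Vizing coloring. In any case, coloring a prefix graph would not give one schedule that is good for all prefixes simultaneously.

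The parallel inequalities $\sum_{e\in F}C_e\ge\frac12|F|(|F|+1)$ for $F\subseteq\delta(v)$ bound every prefix degree by $2C^*_j-1$. Hence each job finishes by $2(2C^*_j-1)-1=4C^*_j-3$, i.e.\ $\omega(S_2)\le 4\algo{OPT}-3W$, which is exactly the bound the paper imports from the literature. Then $\frac{48}{17}+\frac{792}{17}x=4-3x$ at $x^*=\frac{20}{843}$, and the worst-case ratio is $4-\frac{60}{843}=\frac{1104}{281}$.
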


Other variants of Graph Scheduling have been studied, such as when instead of unit size, every edge has some transmission time $p_e \in \mathbb{R}_{\ge 0}$. However, such as in the case of Data Migration, this usually fundamentally changes the nature of the problem and thus also of the required approaches and obtained approximation guarantees. Note that this is distinct from the setting where edges have some multiplicities $p_e \in \mathbb{N}$, for which most results can be transferred at no or only a minor loss in approximation.
We also refer to \cite{megow2025groupschedule} for a broader study of scheduling problems with group completion times, where Graph Scheduling is one of the main applications.

\subsection{Techniques from Coflow Scheduling}
As this work is a generalization of the algorithms and techniques presented in \cite{rohwedder2025_coflow} for Coflow Scheduling, we briefly review the approach and overview the most important technical aspects. The approach has two core steps: In the first step, a deadline is determined for each job. These deadlines are chosen in a way to both fulfill a cost approximation guarantee while also guaranteeing feasibility of an allocation linear program. In the second step, through an iterative rounding procedure an integral point is obtained from the feasible LP. This integral solution does not directly correspond to a feasible schedule, as edges are only assigned to so called blocks and not to individual time slots. To obtain a valid schedule, a time slot assignment is computed using Kőnig's Theorem \cite{König1916} for bipartite graphs.

The algorithm presented in this work follows the same structure, but due to the general multigraph setting both the LP as well as the allocation algorithm have to be significantly modified. The first step of determining the deadlines works essentially unmodified. The second step however requires larger technical modifications, as we employ results from graph theory for which we have to structurally strengthen the LP. This added structure improves the obtained guarantees, but requires additional care when iteratively rounding.

\subsection{Extension to General Graphs}\label{sec:intro:ggextension}
Kőnig's Theorem guarantees the existence and polynomial time computability of a $\Delta(G)$ edge coloring of bipartite graphs of maximum degree $\Delta(G)$, which is an important subroutine of the algorithm. A $k$ edge coloring of a graph directly corresponds to a decomposition into $k$ matchings, so this provides an important tool for creating graph schedules. For general simple graphs, a similar guarantee is given by Vizing's Theorem \cite{Vizing1964}, which gives computability of a $\Delta(G)+1$ coloring in this case. However, for multigraphs, the best possible guarantees are much worse: By a theorem from Shannon \cite{Shannon1949}, every multigraph can be colored with $\lfloor\frac{3}{2}\Delta\rfloor$ colors and there are instances for which this bound is tight.
So if one were to use the approach from \cite{rohwedder2025_coflow} unmodified for general graphs, in the final step in which the schedule is created from the point obtained through iterative rounding, an additional approximation penalty of roughly $\frac{3}{2}$ might be incurred. While the overall approximation ratio might not worsen by as much due to the algorithm employing a combination of different allocation procedures, it is unlikely that large improvements over the approximation ratio of $4$ could be achieved and especially an asymptotic $(2+\epsilon)$-approximation would not be possible. 

To alleviate this, we strengthen the LP used in \cite{rohwedder2025_coflow} using exponentially many odd-set inequalities from Edmonds \cite{edmonds1965maximum} well known matching polytope. After rounding, the resulting integer solution has stronger properties, with which we can apply recent results from edge coloring theory to construct schedules using much fewer time slots than required by naive application of Shannon's algorithm. However, the added constraints create challenges in the iterative rounding, as iterative rounding is very sensitive towards the number of constraints, so more sophisticated counting arguments are required. The improved LP structure allows us to obtain algorithms which both achieve a strictly better than $4$-approximation for general Graph Scheduling as well as a $(2+\epsilon)$-approximation for Graph Scheduling in the asymptotic case. The approximation improvements for Data Migration also require the enhanced algorithm, as a direct approach does not yield sufficiently strong guarantees.

\section{LP Rounding}
In this section we describe a linear program for group completion times on graphs. This LP cannot be directly constructed from the underlying problem instance but rather needs suitable deadlines $C_1,\dots, C_B$ for the jobs $J_1,\dots,J_B$ as a prerequisite. In applications such as seen in \cref{sec:scheduling_algorithm,sec:data_migration}, these deadlines would for example be obtained through another algorithm as a first step. The deadlines have to be chosen such that the LP is feasible and together they describe certain combinatorial structure which fractional solutions of the underlying problem instance are required to obey. Instead of assigning edges to individual time slots, we group the slots into so called blocks, each of which consists of the time frame between two consecutive deadlines. 
We show how to iteratively round this LP to obtain an integral point and through this an integral assignment of edges to the blocks. The thus derived point might not be feasible for the original LP, but we establish strong bounds on the maximum constraint violations incurred through this rounding procedure. In a later step of the algorithm, the edge to block assignments are then turned into time slot assignments and thus into valid schedules for the underlying graph scheduling instance.

\subsection{Linear Program for General Graphs}
Assume that we have $B$ jobs and the corresponding deadlines $C_1,\dots,C_B$ for which the following LP is feasible. In the definition we assume without loss of generality that the jobs are ordered such that $C_1 \le C_2 \le \cdots \le C_B$ and we define $C_0 := 0$ for easier notation.

\begin{gather*}
\begin{aligned}
\sum_{b \in [B]}x_{e,b} \quad &= \quad 1 &\forall e \in E\\
\sum_{e: v \in e} x_{e,b} \quad&\le \quad (C_b - C_{b-1}) &\forall v  \in V, \forall b \in [B]\\
\sum_{e \in E[U]}x_{e,b}\quad&\le\quad (C_b - C_{b-1}) \cdot \lfloor \tfrac{1}{2}|U|\rfloor &\quad\forall U \subseteq V : |U| \text{ odd}, \forall b \in [B] \\
x_{e,b} \quad&= \quad 0 &\forall e \in E, \forall b > b(e) \\
x_{e,b} \quad&\ge\quad 0
\end{aligned}
\end{gather*}
The variables $x_{e,b}$ represent whether edge $e \in E$ is assigned to block $b \in [B]$. Here, $E$ and $V$ are the edge and respectively vertex sets of the underlying graph. For an edge $e \in E$, the function $b(e)$ returns the smallest index $i \in [B]$ of the jobs to which this edge belongs. Before explaining the structure, we slightly modify the LP. For some fixed $\tau  \in \mathbb{N}_{\ge 2}$, we round the deadlines $C_1,\dots,C_B$ up to the next multiple of $\tau$. Let the resulting deadlines be $C_1',\dots, C_B'$ and for all $b \in [B]$ define $k_b \in \mathbb{N}$ such that $C'_b - C'_{b-1} = \tau \cdot k_b$. Using these definitions, we obtain the following LP:

\begin{gather*}\tag{\textbf{LP}}\label{lp:main_gg}
\begin{aligned}
\sum_{b \in [B]}x_{e,b} \quad &= \quad 1 &\forall e \in E\\\sum_{e: v \in e} x_{e,b} \quad&\le \quad k_b \cdot \tau \ &\forall v \in V,\forall b \in [B]\\ \sum_{e \in E[U]}x_{e,b}\quad&\le\quad k_b \cdot \tau \cdot \lfloor \tfrac{1}{2}|U|\rfloor &\quad\forall U \subseteq V : |U| \text{ odd}, \forall b \in [B] \\x_{e,b} \quad&= \quad 0 &\forall e \in E, \forall b > b(e) \\ x_{e,b} \quad&\ge\quad 0
\end{aligned}
\end{gather*}
As explained earlier, we group the previously unit sized time slots into blocks between consecutive deadlines. The first set of constraints ensures that each edge is assigned to some block, while the second set ensures that in each block at most the block's size many edges are adjacent to each vertex. Additionally to these constraints, we also have block versions of the so called odd-set constraints, which ensure that in each block $b \in [B]$ in every odd sized vertex set $|U| \subseteq V$, at most $k_b \cdot \tau \cdot \lfloor \tfrac{1}{2}|U|\rfloor$ many edges can be included. These types of inequalities are well known and were first introduced by Edmonds \cite{edmonds1965maximum}. In the case of the standard matching polytope, for non bipartite graphs these constraints are necessary to ensure that all polytope vertices are integral. In our case, vertex solutions in general will not be integral, but the added structure will be exploited in \cref{sec:int_time_slot_ass} to ensure that certain good schedules exist and can be efficiently constructed.

Note that for \cref{lp:main_gg} without the odd-set constraints there exists an algorithm determining deadlines such that the LP is feasible and such that the weighted sum over these deadlines is $2$-approximate with respect to the optimal cost of the underlying Graph Scheduling instance. This algorithm has been used in previous works for Coflow Scheduling and related problems \cite{im19,fukunaga22}. We show that the algorithm can be extended to our setting with the same guarantees, see \cref{sec:app:2appround} for details.

\subsection{Iterated LP Rounding}\label{sec:iter_lp_round}
In this section we describe how we round a fractional solution to \cref{lp:main_gg} to obtain an integer point, which can be turned into a valid schedule using a procedure which is explained in \cref{sec:int_time_slot_ass}. This integer point might violate some of the original constraints of \cref{lp:main_gg}, but we show that we can control the magnitude of this violation and thus obtain strong delay guarantees for the resulting schedule.

Let $k \in \mathbb{N}$ be a fixed constant. Our rounding procedure has two steps. First we solve \cref{lp:main_gg} and fix every variable which is integral in the solution. In the second step, we perform two modifications to the constraint sets. We drop every vertex $v$ which now has strictly less than $k$ fractional edges adjacent to it. Dropping a vertex means removing the corresponding vertex degree constraint. For the odd-set constraints, whenever we drop some vertex, we stop considering its adjacent fractional edges in any odd-set constraint. So it is possible that some edges only contribute to degree constraints of one vertex but are not considered for the odd-set inequalities.

Let $E_b^{(f)}$ be the set of fractional edges remaining in block $b \in [B]$ and let $V^{(f)}_b$ be the set of vertices in that block with at least $k$ fractional edges adjacent. Let $E^{(1)}$ be the set of edges which have been fixed to $1$ in a previous step and $E^{(0)}$ respectively the ones fixed to $0$. Let $E^{(d)}_b \subseteq E$ be the set of dropped edges in block $b$, meaning that they were fractional at the time one of their vertices was dropped. Using these definitions, in any step after the first one we have the following LP:
\begin{gather*}\tag{\textbf{LP R}} \label{lp:lp_main_red}
\begin{aligned}
\sum_{b \in [B]}x_{e,b} \quad &= \quad 1 &\forall e \in \bigcup_{b \in [B]}E_b^{(f)}\\
\sum_{e: v \in e} x_{e,b} \quad&\le \quad k_b \cdot \tau &b \in [B],\ \forall v \in V^{(f)}_b\\
\sum_{e \in E[U] \setminus E^{(d)}_b}x_{e,b}\quad&\le\quad k_b \cdot \tau \cdot \lfloor \tfrac{1}{2}|U|\rfloor &\quad\forall U \subseteq V: |U| \text{ odd}, \forall b \in [B] \\
x_{e,b} \quad&= \quad 0 &\forall e \in E^{(0)} \\
x_{e,b} \quad&= \quad 1 &\forall e \in E^{(1)}\\
x_{e,b} \quad&\ge\quad 0
\end{aligned}
\end{gather*}
Note that the odd-set inequalities can still be separated in polynomial time by considering the subgraph of only the edges in $E \setminus E_b^{(d)}$ and using the standard separation procedure \cite{padberg82_separation,Schrijver2002_combopt}. We repeat the fixing variables, dropping constraints, and re-solving of the resulting LP until all variables are integral. Whenever we solve the LP, we assume that we obtain a basic feasible solution, which can be done in polynomial time \cite{Tardos1986}.
Analyzing this procedure requires two major ingredients. We have to bound the maximal constraint violations which the integral point will have with respect to \cref{lp:main_gg} and we have to show that the procedure is guaranteed to terminate and run in polynomial time.

\subsection{Bounding Violations}
Assuming that the procedure described in the previous section terminates, we obtain an integral point, which potentially violates some of the constraints of \cref{lp:main_gg}. We give exact upper bounds on this violation:
\begin{lemma}\label{lemma:lp_sol_violations}
For the integral point obtained through the rounding procedure it holds:
\begin{itemize}
    \item[a)] The vertex degree constraints in \cref{lp:main_gg} are exceeded by at most $k-2$.
    \item[b)] For each $b \in [B]$ and $ U \subseteq V_b^{(f)}$, the corresponding odd-set constraint in \cref{lp:main_gg} is violated by at most $2(k-1) \cdot\lfloor \tfrac{1}{2}|U|\rfloor-1$.
\end{itemize}
\end{lemma}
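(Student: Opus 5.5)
Both parts follow from one observation: once a constraint stops being enforced, only its currently fractional variables can still change, and each of them can rise by less than $1$.

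\textbf{Part a).} Fix $b\in[B]$ and a vertex $v$.

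If $v$ is never dropped in block $b$, its degree constraint stays in \cref{lp:lp_main_red} throughout, with the full left-hand side (fixed and fractional variables included). The final integral point is feasible for the last LP solved, so the constraint holds exactly: the violation is $0$.

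Otherwise, consider the moment $v$ is dropped. Just before it, the current solution satisfies $\sum_{e\ni v}x_{e,b}\le k_b\tau$. At the drop, $v$ has at most $k-1$ fractional adjacent edges in block $b$; write $f\le k-1$ for this number. Variables that are already integral never change afterwards. Each fractional $x_{e,b}$ has value in $(0,1)$ and ends at $0$ or $1$, so it increases by strictly less than $1$. Hence the final integral degree is less than $k_b\tau+f$.

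We sharpen this by integrality. Let $I$ be the integer contribution of the fixed edges and $s$ the sum of the $f$ fractional values. Then $I+s\le k_b\tau$ and $s>0$ whenever $f\ge1$. The final degree is at most $I+f$.

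\begin{itemize}
    \item If $s\ge1$, then $I+f\le k_b\tau-1+f\le k_b\tau+k-2$.
    \item If $0<s<1$, then $I\le k_b\tau-1$ still holds, because $I$ is an integer with $I<k_b\tau$. This again gives $I+f\le k_b\tau+k-2$.
\end{itemize}

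So the excess is at most $k-2$. (When $f=0$ there is no excess at all.)

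\textbf{Part b).} Fix $b$ and an odd $U\subseteq V_b^{(f)}$, understood as the set of vertices that remain undropped through the end. The odd-set constraint of \cref{lp:lp_main_red} for $U$ excludes only $E_b^{(d)}$. That constraint is kept throughout, so the final integral point satisfies it exactly. Therefore the violation with respect to \cref{lp:main_gg} equals $|E[U]\cap E_b^{(d)}\cap E^{(1)}|$.

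A dropped edge in $E[U]$ has both endpoints in $U$. It entered $E_b^{(d)}$ because one endpoint was dropped, and that endpoint then lies outside $V_b^{(f)}$. This contradicts $U\subseteq V_b^{(f)}$, so the count is $0$ and the bound $2(k-1)\lfloor\frac12|U|\rfloor-1$ holds trivially.

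The main obstacle is interpretation. If $U$ may contain vertices dropped later, or if $V_b^{(f)}$ means the set at some intermediate time, then I would bound the dropped edges inside $U$ instead. Each dropped vertex contributes at most $k-1$ such edges. Combining this with the odd-set slack at the time of dropping, and using integrality (total at most $k_b\tau\lfloor|U|/2\rfloor$ with strict fractionality), gives an excess of at most $(k-1)|U|-1$. The step to $2(k-1)\lfloor\frac12|U|\rfloor-1=(k-1)(|U|-1)-1$ needs one further saving: since $|U|$ is odd, some vertex's edges are already counted, or one dropped vertex has no partner. I expect this counting to be the delicate part. I would first try charging each dropped edge to its first-dropped endpoint and arguing that the last-dropped vertex of $U$ receives no charge.
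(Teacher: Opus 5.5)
Your part a) is the paper's argument. The case split on $s$ is superfluous: a single fractional edge already forces the integer $I$ below $k_b\tau$.

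The gap is in part b). You read ``$U\subseteq V_b^{(f)}$'' as ``$U$ consists of vertices that are never dropped''. Then no edge of $E[U]$ ever leaves the constraint, and the excess is $0$. The notation is admittedly loose, but that reading empties the lemma. The paper's proof takes an arbitrary odd $U\subseteq V$ and explicitly orders the \emph{dropped} vertices of $U$. Moreover, \cref{lemma:iter_round_violation_bound} b) invokes this lemma for every $U\subseteq V$. That general form is what the density bound $\mathcal{W}(G_{b,x})\le k_b\tau+2(k-1)$ in the appendix requires, since it quantifies over all vertex subsets, dropped vertices included. So the case you relegate to a tentative sketch is the entire substance of b), and your proposal does not prove it.

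For that case, the charging you suggest is exactly the paper's. Order $U$ by drop time, with never-dropped vertices last. An edge of $E_b^{(d)}\cap E[U]$ was fractional when its earlier endpoint was dropped, because a fractional edge was fractional at all earlier times; charge it to that endpoint. Each vertex receives at most $k-1$ charges. The last vertex receives none, since all its edges inside $U$ go to earlier vertices. Hence $|E_b^{(d)}\cap E[U]|\le (k-1)(|U|-1)=2(k-1)\lfloor\tfrac12|U|\rfloor$. The final point satisfies the reduced constraint on $E[U]\setminus E_b^{(d)}$, so the excess is at most this count; your intermediate bound $(k-1)|U|-1$ is not needed.

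The remaining $-1$ needs more than ``the slack at the time of dropping''. Unlike a degree constraint, the odd-set constraint is never removed. It keeps acting on the surviving edges of $E[U]$, and these may later fill it up to $k_b\tau\lfloor\tfrac12|U|\rfloor$; the paper is terse here as well. The step goes through with a case split:
\begin{itemize}
\item If fewer than $(k-1)(|U|-1)$ edges of $E[U]$ are dropped, you are done.
\item Otherwise every $v_i$ with $i<|U|$ is dropped and charged exactly $k-1\ge 1$ times. After the drop of $v_{|U|-1}$, every edge of $E[U]$ that is still fractional lies in $E_b^{(d)}$, so the surviving part of $E[U]$ is integral from then on. In the LP solved just before that drop, the edges charged to $v_{|U|-1}$ were still in the constraint with positive value. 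So the surviving part was strictly below $k_b\tau\lfloor\tfrac12|U|\rfloor$ there, and by integrality it is at most $k_b\tau\lfloor\tfrac12|U|\rfloor-1$.
\end{itemize}
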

\begin{proof}
Consider some fixed vertex $v \in V$ and the corresponding vertex constraint $\sum_{e:v\in e} x_e \le k_b \cdot \tau$. This constraint gets dropped when at most $k-1$ edges adjacent to $v$ are still fractional. Integral and thus fixed edges never get changed again, so only these $k-1$ edges could change compared to the solution at the moment the constraint is dropped. If they all get included and thus set to $1$, the total sum on the left side would increase by at most $k-1$. However, as these edges are fractional at the point the constraint gets dropped, the sum over the fixed edges at that point can be at most $k_b \cdot \tau - 1$, hence the maximum violation of vertex constraints is bounded by $k-2$.
For the odd-set constraints, consider some vertex set $U \subseteq V$ of odd size. The corresponding odd-set constraint never gets dropped from the LP during the procedure, however the considered set of edges will change over time. Every time a vertex $v$ from $U$ is dropped, the fractional edges adjacent to $v$ in $E[U]$ will no longer be considered in this constraint in future iterations. Consider the vertices $v_1,\dots,v_{|U|}$ in $U$ by the order in which they are dropped, with all non-dropped vertices at the end. By definition, there are at most $k-1$ fractional edges adjacent to a vertex at the time it is dropped. However, when counting the edges we only need to consider the ones which are connected to vertices later in the order, as the ones connected to earlier vertices will have been counted in an earlier step already. Therefore, specifically the very last vertex can never contribute to the edge count in this counting argument. Thus there are at most $(|U|-1)\cdot(k-1)$ fractional edges contained in $E_b^{(d)}[U]$. Through the same fractionality argument as above, this bound can further be strengthened by $-1$, as the sum over the fixed edges has to be strictly smaller than the upper bound to allow fractional edges to exist. So, assuming that all fractional dropped edges later get included, the constraint bound is exceeded by at most $(|U|-1)\cdot (k-1) -1 = 2(k-1) \cdot \lfloor \tfrac{1}{2}|U|\rfloor -1$.
\end{proof}

\subsection{Counting Constraints}
In order to establish that the procedure terminates, we show that in each step we make progress, meaning that we obtain at least one new integral variable. To prove this, we bound the number of tight linearly independent constraints and show that this is strictly less than the number of variables, which by standard arguments from LP theory suffices to establish the claim. When counting tight constraints in \cref{lp:lp_main_red}, we do not consider the constraints of the form $x_{e,b} = 0$, $x_{e,b} = 1$ or $x_{e,b} \ge 0$, as them being tight directly implies existence of an integral variable. Also the fixing constraints and their respective variables are in a one to one correspondence, so these variables do not affect the inequality argument and are thus omitted from the count.

The total number of variables in each intermediate step is $\sum_{b \in [B]}|E_b^{(f)}|$. We independently for each block establish certain bounds on the number of tight linearly independent constraints. Let $b \in [B]$ be thus fixed and as previously consider the sets $V^{(f)}_b$ and $E_b^{(f)}$. For each $v \in V_b^{(f)}$ and for each $e \in \bigcup_{b \in [B]}E_b^{(f)}$ we obtain one constraint.

\begin{lemma}\label{lemma:frac_v_bound}
    For all $b \in [B]:$ $\quad|V_b^{(f)}| \qle \frac{2}{k}|E_b^{(f)}|$
\end{lemma}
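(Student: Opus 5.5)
This follows from a double-counting argument on the incidences between the vertices in $V_b^{(f)}$ and the fractional edges of block $b$. The key is the definition of $V_b^{(f)}$: a vertex stays in this set (is not dropped) exactly when at least $k$ fractional edges of block $b$ are adjacent to it. I will therefore count the pairs $(v,e)$ with $v \in V_b^{(f)}$, $e \in E_b^{(f)}$ and $v \in e$ in two ways.

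\emph{Counting from the vertex side.} By the dropping rule, every $v \in V_b^{(f)}$ has at least $k$ adjacent edges in $E_b^{(f)}$, so the number of such pairs is at least $k\,|V_b^{(f)}|$. I will be careful to use the degree of $v$ with respect to the fractional edges of block $b$ only, since that is the quantity the dropping rule refers to when it builds $V_b^{(f)}$. The reduction happens after the variables are fixed and before the LP is re-solved, so at the moment we count, the sets $V_b^{(f)}$ and $E_b^{(f)}$ are consistent with the rule.

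\emph{Counting from the edge side.} Each edge $e \in E_b^{(f)}$ has two endpoints, so it lies in at most two such pairs. This holds for multigraph edges too, because each edge still has exactly two endpoints. It also covers dropped endpoints: these only reduce the count, and the upper bound of $2$ remains valid. Hence the number of pairs is at most $2\,|E_b^{(f)}|$.

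Combining the two counts gives $k\,|V_b^{(f)}| \le 2\,|E_b^{(f)}|$, which is the statement. I do not expect any real obstacle here. The only point needing care is the one already noted: the degree threshold must be taken per block, with respect to $E_b^{(f)}$ at the current iteration, so that the lower bound of $k$ per vertex is justified. Self-loops do not need separate treatment, since they cannot occur in a matching-based instance, and in any case they would still count as at most two incidences.
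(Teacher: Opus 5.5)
Your proposal is correct and is essentially the paper's proof: both double count the incidences $(v,e)$ with $v \in V_b^{(f)}$, $e \in E_b^{(f)} \cap \delta(v)$. The lower bound $k\,|V_b^{(f)}|$ comes from the dropping rule, the upper bound $2\,|E_b^{(f)}|$ from each edge having two endpoints, and combining them gives the claim.
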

\begin{proof}
By definition, every vertex in $V_b^{(f)}$ has at least $k$ fractional edges adjacent, so for every $b \in [B]$ we have
\begin{equation*}
    k \cdot |V_b^{(f)}| \hle |\{(v,e)\ |\ v \in V_b^{(f)}, e \in E_b^{(f)} \cap \delta(v) \}|.
\end{equation*}
Every edge contains exactly two vertices, so we also have the bound
\begin{equation*}
    |\{(v,e)\ |\ v \in V_b^{(f)}, e \in E_b^{(f)} \cap \delta(v) \}| \hle 2 \cdot |E_b^{(f)}|.
\end{equation*}
Combining the two and rearranging yields the result.
\end{proof}

\begin{lemma}\label{lemma:frac_e_bound}
It holds $\quad|\bigcup_{b \in [B]}E_b^{(f)}| \qle \frac{1}{2}\sum_{b \in [B]}|E_b^{(f)}|$.
\end{lemma}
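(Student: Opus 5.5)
The claim is that every edge $e \in \bigcup_{b}E_b^{(f)}$ has a fractional variable in at least two blocks, so a double count gives the bound. The context is an extreme point of \cref{lp:lp_main_red} at a step where no further variable can be fixed. This is the situation in the progress argument: the LP has just been re-solved, and every integral variable has already been fixed and removed from the count.

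\textbf{Key step.} Fix an edge $e$ that is still fractional, meaning its assignment constraint $\sum_{b} x_{e,b} = 1$ is still present. Let $F_e = \{b \in [B] : e \in E_b^{(f)}\}$ be the set of blocks in which $x_{e,b}$ is a free, non-fixed variable. The variables of $e$ in blocks outside $F_e$ have been fixed. A variable fixed to $1$ would already make the edge integral and remove it from consideration, so all of them are fixed to $0$. The assignment constraint therefore reads $\sum_{b \in F_e} x_{e,b} = 1$ with every term strictly between $0$ and $1$, since each is fractional. This is impossible if $|F_e| = 1$, so $|F_e| \ge 2$.

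\textbf{Conclusion.} Count pairs $(e,b)$ with $e \in E_b^{(f)}$ in two ways. Grouping by block gives $\sum_{b \in [B]}|E_b^{(f)}|$. Grouping by edge gives $\sum_{e}|F_e| \ge 2\,|\bigcup_b E_b^{(f)}|$. Rearranging yields the stated inequality.

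\textbf{Main obstacle.} The only delicate point is the bookkeeping. The argument must make sure that an edge counted in $\bigcup_b E_b^{(f)}$ has no variable fixed to $1$, and that "fractional" means strictly in $(0,1)$ at the current basic solution. Two conventions from \cref{sec:iter_lp_round} settle this. First, a variable at value $1$ forces every other variable of that edge to $0$ through the assignment constraint, so all of them are fixed simultaneously. Second, $E_b^{(f)}$ contains only variables that are strictly fractional in the current solution.
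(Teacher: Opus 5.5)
Your proof is correct and is the same double-counting argument as the paper's: the assignment constraint $\sum_{b} x_{e,b}=1$ forces every fractional edge to be fractional in at least two blocks, so it contributes once to the left side and at least twice to the right. Your extra bookkeeping, that all non-free variables of such an edge must be fixed to $0$, just makes explicit a point the paper leaves implicit.
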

\begin{proof}
The first set of constraints in \cref{lp:lp_main_red} prescribes that $\sum_{b \in [B]}x_{e,b} = 1$, which implies that whenever some edge $e$ has a fractional assignment in some block $b$, there has to exist at least one other block in which this edge is also fractionally assigned. Hence such an edge only contributes once to the left side of the inequality and contributes at least twice to the right side, yielding the claimed bound.
\end{proof}

\cref{lemma:frac_v_bound} and \cref{lemma:frac_e_bound} provide bounds for the first two sets of constraints, so it remains to establish a similar bound for the odd-set inequalities. A priori the number of inequalities is exponential, as there is one inequality for each subset $U \subseteq V$ of odd size. However, we claim that the number of such tight constraints which are linearly independent of each other and all other constraints in \cref{lp:lp_main_red} is actually linear in $|V_b^{(f)}|$. Whenever we write linearly independent we refer to this notion of independence.

\begin{lemma}\label{lemma:odd_set_bound}
    In \cref{lp:lp_main_red}, the number of tight linearly independent odd-set constraints $|\mathcal{O}_b|$ in block $b \in [B]$ is upper bounded by $|V_b^{(f)}| - 1$.
\end{lemma}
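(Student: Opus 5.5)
We will use the standard uncrossing argument for the matching polytope, adapted to the reduced setting, and show that a maximal linearly independent family of tight odd-set constraints in block $b$ can be chosen laminar with every set contained in $V_b^{(f)}$ and of size at least $3$. A laminar family of such subsets of a ground set of size $n = |V_b^{(f)}|$ has at most $n-1$ members, which would give the bound $|\mathcal{O}_b|\le |V_b^{(f)}|-1$.

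\textbf{Step 1: reduction to subsets of $V_b^{(f)}$.} The relevant edge set for block $b$ is $E_b^{(f)} \setminus E_b^{(d)}$, i.e.\ fractional edges with both endpoints in $V_b^{(f)}$; all other edges are fixed or excluded. So the variable part of an odd-set constraint for $U$ depends only on $U\cap V_b^{(f)}$, with a constant contribution from fixed edges. For a tight $U$ we would replace it by $U'=U\cap V_b^{(f)}$, adjusting parity by removing one vertex if needed. This changes both the right-hand side and the fixed-edge count, so we must check that tightness and the variable support (the characteristic vector $\chi^{E_b^{(f)}[U]}$) are preserved or that the replaced constraint is implied by the others. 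If this fails, we instead work directly with the vectors $\chi_U := \chi^{E[U]\cap E_b^{(f)}\setminus E_b^{(d)}}$, which live only on $V_b^{(f)}$, and run the uncrossing on those. Sets $U$ with $|U\cap V_b^{(f)}|\le 1$ have zero variable part and contribute nothing to the rank.

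\textbf{Step 2: uncrossing.} Take two tight odd sets $U,W$ that cross. Either $|U\cap W|$ is odd, or $|U\setminus W|$ and $|W\setminus U|$ are odd. In the first case we use the supermodular-type inequality
\[
\chi_U + \chi_W \le \chi_{U\cap W} + \chi_{U\cup W},
\]
where the only slack comes from edges between $U\setminus W$ and $W\setminus U$. The right-hand sides satisfy $\lfloor |U|/2\rfloor+\lfloor |W|/2\rfloor = \lfloor |U\cap W|/2\rfloor+\lfloor |U\cup W|/2\rfloor$. Feasibility of $U\cap W$ and $U\cup W$ then forces both to be tight and forces no fractional edge between the differences, so $\chi_U+\chi_W=\chi_{U\cap W}+\chi_{U\cup W}$. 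The second case is handled analogously with $U\setminus W$ and $W\setminus U$, using degree constraints of vertices in $U\cap W$ (these lie in $V_b^{(f)}$ and are therefore present) as in the classical proof. The delicate point is the fixed edges: we need the identity to hold for the full left-hand side including fixed and dropped parts. We will verify this by applying the same counting to the fixed edges and using the right-hand-side identity, so that tightness transfers exactly.

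\textbf{Step 3: laminar span.} With the crossing identities in hand, the usual argument (pick a maximal laminar subfamily $\mathcal{L}$ of tight sets; if some tight set is outside $\mathrm{span}(\mathcal{L})$, choose one crossing the fewest members of $\mathcal{L}$, and the uncrossed sets cross strictly fewer) shows that the span of all tight odd-set vectors equals the span of a laminar family. Degree constraints may enter the span in the second uncrossing case, but independence is measured jointly with all other constraints, so this is harmless. Finally, laminar families of odd sets of size $\ge 3$ in an $n$-element ground set have at most $(n-1)/2\le n-1$ members.

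The main obstacle is Steps 1--2: making the uncrossing correct in the reduced LP, where dropped edges are excluded and fixed edges contribute constants. The second parity case is the riskiest, since it relies on degree constraints that exist only for vertices in $V_b^{(f)}$. We will first check that crossing vertices necessarily lie in $V_b^{(f)}$, because sets may be restricted to $V_b^{(f)}$.
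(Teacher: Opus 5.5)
\textbf{Gap: the Step~1 reduction fails, and the final count and Step~2 depend on it.} Your overall route matches the paper's: uncross to a laminar family, note that the variable part of the constraint for $U$ depends only on $U\cap V_b^{(f)}$, then count. The paper phrases the middle step as ``extensions by dropped vertices only are dependent via the fixed-edge equalities.'' As written, though, your plan does not close.

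\textbf{The primary reduction and the count.} The reduction in Step~1 is impossible in general. Take $U=\{u,v,w\}$ with $u,v\in V_b^{(f)}$ and $w$ dropped. Nothing prevents $U$ from being tight, and its variable part is the set of fractional $uv$-edges of block $b$. No odd subset of $V_b^{(f)}$ has this variable part: singletons have none, and odd sets of size at least $3$ have a different one. So you are forced into your fallback. There, your final count (at most $(n-1)/2$ laminar odd sets of size at least $3$) no longer applies, because the traces $U\cap V_b^{(f)}$ may be even and of size $2$. For a chain in which each set adds one non-dropped vertex and an odd number of dropped ones, the traces have sizes $2,3,4,\dots$. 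The correct count is this: the traces of an independent laminar family form a laminar family of \emph{distinct} subsets of $V_b^{(f)}$ of size at least $2$, since equal traces give rows differing only on fixed variables. Such a family has at most $|V_b^{(f)}|-1$ members. This is exactly the paper's induction, and it is why the bound is $n-1$ rather than $(n-1)/2$.

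\textbf{The even case of the uncrossing.} Your treatment of $|U\cap W|$ even in Step~2 also relies on the failed reduction. In the fallback the uncrossing must be done on the original odd sets, because the case split is by parity and traces do not respect parity. So $U\cap W$ may contain dropped vertices, the check you announce is false, and those vertices have no degree constraint.

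The missing observation is about a dropped vertex $v$. Every block-$b$ edge at $v$ outside $D:=E_b^{(d)}$ is fixed, and at most $k_b\tau$ of them are fixed to $1$, because $v$'s degree constraint held in the iteration in which $v$ was dropped. This constant bound can replace the missing degree constraint in the classical inequality
$x(E[U]\setminus D)+x(E[W]\setminus D)\le x(E[U\setminus W]\setminus D)+x(E[W\setminus U]\setminus D)+\sum_{v\in U\cap W}x(\delta(v)\setminus D)$,
and it contributes nothing to the variable part.

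For non-dropped $v\in U\cap W$, the degree constraint also counts $D$-edges that the odd-set constraints omit. This is nonnegative slack, and tightness of $U$ and $W$ forces it to vanish on the free (strictly fractional) variables. Hence the uncrossing identity holds exactly on the free coordinates.

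With these two repairs your argument goes through. It is then more explicit than the paper's, which only cites ``standard uncrossing arguments'' for the reduced LP.
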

\begin{proof}
We establish the result in two steps. We first show certain structure which linearly independent constraints must obey and then establish the bound using an inductive argument. 

It is well known that the vertex sets belonging to the tight odd-set constraints form a laminar family $\mathcal{L}$, which can be shown using standard uncrossing arguments \cite{schrijver1977laminar}. We claim that there cannot be an extension of a laminar set using only dropped vertices, meaning that if there is a set $L \in \mathcal{L}$ and set $L_1\in \mathcal{L}$ which is subset of $L$ such that $L \setminus L_1$ only consists of dropped vertices, then $L$ is linearly dependent of $L_1$ and possibly some tight edge constraints. As all vertices in $L \setminus L_1$ were dropped at some point, by definition no fractional edge adjacent to them will be considered in the constraint belonging to $L$. So the only possible difference to the constraint belonging to $L_1$ is in variables belonging to fixed edges adjacent to vertices in $L \setminus L_1$. However, for each of these by definition there is a equality constraint setting them to $1$. Therefore, the constraint of $L$ can be written as a linear combination of these fixed edge constraints and the constraint of $L_1$, showing that it is not linearly independent.

We show using induction that the maximum cardinality laminar family representing tight linearly independent odd-sets on $n \ge 2$ non-dropped vertices (and arbitrarily many dropped vertices) has size at most $n-1$. Any odd-set which induces a constraint that is not simply a linear combination of fixed edge constraints must contain at least $2$ non-dropped vertices, as otherwise it cannot contain any fractional edges. For $n=2$, this leaves a single set with $2$ non-dropped vertices and some odd number of dropped vertices as the only option, as any other set would either be disjoint and thus contain only dropped vertices or it would have to be an extension using only dropped vertices, which by the previous argument is not possible.

Now consider some laminar family $\mathcal{L}$ on $n \ge 3$ non-dropped vertices. If $\mathcal{L}$ contains more than one set $L_1,L_2,\dots,L_i$, each not contained in any larger set, then each has to contain at least two non-dropped vertices and thus also strictly fewer than $n$ non-dropped vertices. Thus by the induction hypothesis, if $\ell_1,\dots,\ell_i$ are the respective number of non-dropped vertices in the sets, each can consist of at most $\ell_1-1,\dots,\ell_i-1$ sets. Thus the total number of sets is at most $n - i < n-1$. If no two or more such sets exist, then either $\mathcal{L}$ contains no set, in which case the claim is trivially true, or it contains exactly one such maximal set $L$. If $L$ contains exactly one maximal set $L_1$, then it has to contain at least one non-dropped vertex not in $L_1$ to be linearly independent, so applying the induction hypothesis to $L_1$ directly yields the result. If $L$ contains $i \ge 2$ maximal sets, then again each has to contain non-dropped vertices, so we can apply the induction hypothesis to each and get an upper bound of $n-i$ contained sets in $L$. Together with $L$ itself this yields the bound $n-1$.

Note that a single chain of $n-1$ nested sets, where the innermost set contains $2$ non-dropped and an odd number of dropped vertices and every other set extends its predecessor by a single non-dropped and an odd number of dropped vertices has the claimed size, so this bound is tight.
\end{proof}

\subsection{Integral Solution Guarantees}
Using the just established results, we derive the following lemma, which gives precise bounds on the achieved constraint violations for the integral point obtained through the iterated rounding procedure.

\begin{lemma}\label{lemma:iter_round_violation_bound}
Given integral deadlines $C_1,\dots,C_B$ for which \cref{lp:main_gg} is feasible, in polynomial time we can find an integral point such that:
\begin{itemize}
    \item[a)] The vertex degree constraints in \cref{lp:main_gg} are exceeded by at most $6$.
    \item[b)] For each $b\in [B]$ and $U \subseteq V$, the corresponding odd-set inequality in \cref{lp:main_gg} is exceeded by at most $14 \lfloor\tfrac{1}{2}|U|\rfloor -1$.
    \item[c)] All other constraints in \cref{lp:main_gg} are fulfilled.
\end{itemize}
\end{lemma}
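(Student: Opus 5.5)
Set $k=8$ in the rounding procedure of \cref{sec:iter_lp_round}. Start from a feasible solution of \cref{lp:main_gg}; rounding the deadlines up to multiples of $\tau$ only relaxes the constraints, so this LP is feasible whenever the LP with the original deadlines is. Once the procedure terminates, \cref{lemma:lp_sol_violations} gives the violation bounds directly. Part a) becomes $k-2=6$. Part b) becomes $2(k-1)\lfloor\tfrac12|U|\rfloor-1 = 14\lfloor\tfrac12|U|\rfloor-1$.

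For arbitrary $U\subseteq V$ in b), I would rerun the ordering argument from the proof of \cref{lemma:lp_sol_violations}. Its $(|U|-1)(k-1)$ bound on dropped fractional edges inside $E[U]$ never used $U\subseteq V_b^{(f)}$. All other edges of $E[U]$ are either fixed or still counted in the reduced odd-set constraint, which stays satisfied at every iteration. Part c) holds because the assignment equalities and the constraints $x_{e,b}=0$ for $b>b(e)$ are never dropped. Integrality together with $\sum_b x_{e,b}=1$ then puts each edge in exactly one admissible block.

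The main step is termination within polynomially many iterations. It suffices to show that every basic solution of \cref{lp:lp_main_red} has at least one integral non-fixed variable. Each iteration then fixes at least one of the polynomially many variables. Each LP is solvable in polynomial time through the separation oracle for the odd-set inequalities noted earlier, and a basic solution can be obtained by \cite{Tardos1986}.

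Suppose instead that all remaining variables are fractional. Then a basis consists of tight, linearly independent edge, vertex and odd-set constraints, and there must be $\sum_b|E_b^{(f)}|$ of them. Per block, \cref{lemma:odd_set_bound} gives at most $|V_b^{(f)}|-1$ odd-set constraints, and vertex constraints number at most $|V_b^{(f)}|$. So by \cref{lemma:frac_v_bound} these two groups contribute fewer than $2|V_b^{(f)}|\le \tfrac{4}{k}|E_b^{(f)}|=\tfrac12|E_b^{(f)}|$. If a block has no fractional vertex, its contribution is zero and this still holds, up to checking that the strictness is not lost there. \cref{lemma:frac_e_bound} bounds the edge constraints by $\tfrac12\sum_b|E_b^{(f)}|$. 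Summing over blocks, the number of tight independent constraints is strictly less than $\sum_b|E_b^{(f)}|$ whenever some block has a fractional vertex, which contradicts basicity.

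The case I expect to be delicate is when no block has a non-dropped vertex. Then only the edge equalities and the odd-set constraints restricted to $E\setminus E_b^{(d)}$ remain, and the odd-set bound of $|V_b^{(f)}|-1$ degenerates. Here I would argue that an odd-set constraint containing a fractional edge needs two non-dropped endpoints, so none exist. That leaves only the edge equalities, at most $\tfrac12$ of the variables, which is again strictly fewer, so an integral variable must appear.
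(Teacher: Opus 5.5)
Your proposal is correct and is essentially the paper's proof. With $k=8$, parts a) and b) are read off from \cref{lemma:lp_sol_violations}, part c) follows from the constraints that are never dropped, and termination comes from combining \cref{lemma:frac_v_bound}, \cref{lemma:frac_e_bound} and \cref{lemma:odd_set_bound} to get fewer tight independent constraints than fractional variables. Your extra care is an improvement: the paper's $-B$ term silently uses $|\mathcal{O}_b|\le|V_b^{(f)}|-1$ even when $V_b^{(f)}=\emptyset$, which your case split avoids, and you rightly note that the ordering argument works for every odd $U\subseteq V$. You should drop the aside that rounding deadlines up ``only relaxes'' the constraints. Blockwise it is false: $C_1=1$, $C_2=\tau$ leaves block~$2$ with capacity $0$. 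It is also unnecessary, since feasibility of \cref{lp:main_gg} is a hypothesis of the lemma.
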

\begin{proof}
We have to show two things, namely that the iterated rounding procedure terminates in polynomial time and that all of the claimed constraint (non-) violations hold. For the runtime, we show that in each iteration the number of linearly independent constraints is strictly lower than the number of variables, which by standard LP theory implies that in a basic solution at least one variable must be integral. Combining the bounds from \cref{lemma:frac_v_bound}, \cref{lemma:frac_e_bound}, and \cref{lemma:odd_set_bound} we obtain for the constraints $\algo{Cons}$ and variables $\algo{Vars}$ in \cref{lp:lp_main_red}:
\begin{align*}
    |\algo{Cons}| &\qeq |\bigcup_{b \in [B]} E_b^{(f)}| \hop{+} \sum_{b \in [B]}|V_b^{(f)}| \hop{+} \sum_{b \in [B]} |\mathcal{O}_b|\\~&\qle \mfrac{1}{2}\sum_{b \in [B]}|E_b^{(f)}| \hop{+} \mfrac{2}{k}\sum_{b \in [B]}|E_b^{(f)}| \hop{+} \mfrac{2}{k}\sum_{b \in [B]}|E_b^{(f)}| \hop{-} B\\~&\qeq \Big(\mfrac{1}{2} + \mfrac{4}{k}\Big)|\algo{Vars}| \hop{-} B
\end{align*}
So for $k \ge 8$ we have the desired strict inequality, implying that there is at least one new fixed integral variable in every iteration. From this it directly follows that the procedure runs in polynomial time, as only a polynomial number of iterations of the rounding procedure are needed and as a basic solution to \cref{lp:lp_main_red} can be determined in polynomial time.
It is easy to see that c) must hold, as these other constraints are never modified and thus remain valid during each step of the rounding procedure. For the choice $k=8$, the claims in a) and b) directly follow from the violation bounds shown in \cref{lemma:lp_sol_violations}.
\end{proof}

\section{Integral Time Slot Assignments}\label{sec:int_time_slot_ass}
Given an integral assignment of edges to blocks, we want to determine a corresponding assignment of these edges to individual time slots. For bipartite graphs, Kőnig's Theorem provides a simple tool to obtain such assignments, but as discussed in \cref{sec:intro:ggextension}, for general multigraphs no similarly strong guarantees hold. However, using further graph properties apart from the maximum degree, useful guarantees can still be established. Further information and historic context for the concepts used and for general edge coloring can be found for example in \cite{scheide2009_edgecolouring,Cao2019-ow,toft2021_edgecoloring}.

Due to space constraints we refer for most of the details to \cref{sec:app:graph_theory} in the appendix and only briefly recap the main ideas here. Using the odd-set constraints in \cref{lp:main_gg} and the guarantees from \cref{lemma:iter_round_violation_bound}, we show that the so called density of each graph belonging to each block $b \in [B]$ in an integral solution obtained by the rounding procedure is at most $k_b \cdot \tau + 14$. Using recent results from edge-coloring theory for multigraphs \cite{jing2024edgecoloringmultigraphs}, this enables us to schedule all edges in each block in at most $k_b \cdot \tau + 14$ time slots. From this result, a procedure with the following guarantees follows, where $\algo{GBF}^\tau(C_j)$ refers to the finishing time of job $j$ in the returned schedule.

\begin{restatable}{lemma}{algdelaybound}\label{lemma:alg_delay_bound}
For given deadlines $C_1,\dots, C_B$ for which \ref{lp:main_gg} is feasible and a parameter $\tau \in \mathbb{N}_{\ge 2}$, there is an algorithm $\algo{GBF}^\tau$ returning a valid schedule such that the following holds for all $j \in [B]$:
\begin{equation*}
    \algo{GBF}^\tau(C_j) \qle \mfrac{\tau+ 14}{\tau}C_j \hop{+} \tau \hop{+} 14
\end{equation*}
\end{restatable}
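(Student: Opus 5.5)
The plan is to run the iterated rounding of \cref{lemma:iter_round_violation_bound} on \cref{lp:main_gg} with the rounded deadlines $C'_1,\dots,C'_B$. Then I color the edges assigned to each block $b$ separately and concatenate the blocks in order. Feasibility of \cref{lp:main_gg} for the rounded deadlines follows from feasibility for the original ones. The rounded deadlines only enlarge the right-hand sides, since $C'_b - C'_{b-1}$ summed up to any $b$ dominates $C_b$. Any solution for the original deadlines can be pushed into the new blocks: each old block $[C_{b-1},C_b)$ lies inside the new prefix up to $C'_b$. Capacity is cumulative, so a mass-shifting argument goes through. Alternatively, one can observe that the LP with deadlines $C'$ is implied directly by feasibility of the fractional schedule defining $C$.

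Let $G_b$ be the multigraph of edges the integral point assigns to block $b$. By \cref{lemma:iter_round_violation_bound} a), $\Delta(G_b)\le k_b\tau+6$. By part b), every odd $U$ satisfies
\[|E_b[U]| \le k_b\tau\lfloor\tfrac12|U|\rfloor + 14\lfloor\tfrac12|U|\rfloor - 1,\]
so the density $\max_U \frac{2|E_b[U]|}{|U|-1}$ is at most $k_b\tau+14$. I would then invoke the multigraph edge-coloring result of \cite{jing2024edgecoloringmultigraphs}: a multigraph with $\Gamma(G)$ and $\Delta(G)$ both at most $m$ admits an $m$-edge-coloring, constructible in polynomial time, whenever $m \ge \max\{\Delta+1,\Gamma\}$ (Goldberg–Seymour). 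Taking $m = k_b\tau+14 \ge \Delta+1$, block $b$ is scheduled in $k_b\tau+14$ slots. The main obstacle is checking the hypotheses of that theorem exactly, in particular the $\Delta+1$ requirement and polynomial-time constructivity. The slack between $6$ and $14$ covers the first of these.

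For the timing bound, job $j$ has all its edges assigned to blocks $b\le j$, by the constraint $x_{e,b}=0$ for $b>b(e)$. It therefore finishes by the end of block $j$ in the concatenated schedule:
\[\algo{GBF}^\tau(C_j)\le \sum_{b\le j}(k_b\tau+14).\]
Only blocks with $k_b\ge 1$ need slots, because empty blocks receive no edges: their degree bound is $0$, and the violation bound together with the integral fixing forces them to be empty. With that, the sum is at most $C'_j + 14\cdot\#\{b\le j: k_b\ge1\}$. Each nonempty block has length at least $\tau$, so the count is at most $C'_j/\tau$. Hence
\[\algo{GBF}^\tau(C_j)\le \tfrac{\tau+14}{\tau}C'_j \le \tfrac{\tau+14}{\tau}(C_j+\tau) = \tfrac{\tau+14}{\tau}C_j+\tau+14.\]
The delicate point is the empty-block case. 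If a block with $k_b=0$ could still receive up to $6$ edges, I would instead merge such blocks into the preceding nonempty one, or argue that the rounding never moves edges into them because degree constraints at value $0$ force the variables to be $0$ from the start.
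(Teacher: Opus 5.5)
Your proof is correct and follows the paper's route. You round the deadlines to multiples of $\tau$, use the iterated rounding to get degree at most $k_b\tau+6$ and density at most $k_b\tau+14$ per block, color each block with $k_b\tau+14$ colors via the Goldberg--Seymour bound, concatenate the blocks, and charge $14$ extra slots to each nonempty block of length at least $\tau$; this last step is exactly the paper's $(k+1)(\tau+14)$ count. You are also more careful than the paper on two points it glosses over: that feasibility survives rounding the deadlines (the time-proportional reassignment you hint at is the right argument), and that zero-length blocks stay empty (the first LP solve already fixes their variables to $0$, whereas the paper simply merges such blocks without loss of generality).
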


\subsection{Scheduling Algorithm}\label{sec:scheduling_algorithm}
The guarantees from \cref{lemma:iter_round_violation_bound} provide bounds on the delay incurred through the iterated rounding procedure, however the lemma implicitly relies on the deadlines used in the definition of \cref{lp:main_gg}. By combining this lemma with an approximation algorithm which generates deadlines, we obtain a more explicit parameterized scheduling algorithm.

\mainschedulingalg*

\begin{proof}
Let $G=(V,E)$ be a multigraph representing a Graph Scheduling instance, together with jobs $J_1,\dots,J_B \subseteq E$ and a weight vector $\omega \in \mathbb{R}^B_{\ge 0}$.  From \cref{lemma:app:two_approx} shown in the Appendix in \cref{sec:app:2appround}, we obtain that there is an algorithm returning deadlines for this instance for which \cref{lp:main_gg} is feasible such that the following upper bound holds with respect to the optimal cost $\algo{OPT}$:
\begin{equation*}
\sum_{j \in [B]} \omega_j C_j \qle  2\cdot \algo{OPT} \hop{-}  \sum_{j \in [B]}\omega_j
\end{equation*}
Combining these deadlines with the procedure from \cref{lemma:alg_delay_bound} yields the following guarantees for any $\tau \ge 2$. Here again $\algo{GBF}^\tau(C_j)$ is the finishing time of job $J_j$ in the schedule returned by the procedure.\hypertarget{link:main_scheduling_alg_proof}{}
\begin{align*}
    \sum_{j \in [B]}\omega_j \algo{GBF}^\tau(C_j) &\qle \sum_{j \in [B]}\omega_j\Big(\mfrac{\tau+14}{\tau}C_j + \tau +14\Big) \\
    &\qeq \mfrac{\tau+14}{\tau}\sum_{j \in [B]}\omega_jC_j \hop{+} (\tau + 14) \sum_{j \in [B]}\omega_j \\ ~ &\qle \mfrac{\tau + 14}{\tau}\Big(2\cdot \algo{OPT} - \sum_{j \in [B]} \omega_j\Big) \hop{+} (\tau + 14)\sum_{j \in [B]}\omega_j \\&\qeq 2 \cdot \mfrac{\tau+14}{\tau} \cdot \algo{OPT} \hop{+} \big(\tau+14-\tfrac{\tau+14}{\tau}\big)\sum_{j \in [B]}\omega_j
\end{align*}
As $\tau$ can be chosen arbitrarily large, for any $\epsilon > 0$ there is a $\tau$ such that we obtain a cost bound of 
\begin{equation*}
    (2+\epsilon)\cdot \algo{OPT} \hop{+} \mathcal{O}\Big(\mfrac{1}{\epsilon}\Big)\cdot \sum_{j \in [B]}\omega_j. \qedhere
\end{equation*}
\end{proof}
Note that we are focusing on the technical ideas in this work and have not fully optimized the various constants appearing throughout. The same ideas as used in \cite{rohwedder2025_coflow} should provide small improvements for the guarantees shown in \cref{thm:main_scheduling_alg} and further enhancements are likely possible.

\section{Data Migration}\label{sec:data_migration}
Data Migration is defined as Graph Scheduling on a general multigraph $G=(V,E)$ with a special job structure. For each vertex $v \in V$ there is a job $J_v := \{e \in \delta_E(v)\}$, so $J_v$ contains all edges adjacent to $v$. This is equivalent to considering each vertex as a job and defining a vertex to finish whenever all its adjacent edges have been scheduled. As the name suggests, the setting is inspired by a problem in distributed computation and storage, where the vertices for example represent some distributed servers and each edge represents some data migration request. We only consider unit sized edges here as the setting with general demands requires very different techniques.

\subsection{$(1 + \phi)$-Approximation}
The so far best known approximation algorithm for Data Migration provides a $(1+\phi) \approx 2.618$-approximation \cite{mestre2010}. Here $\phi \approx 1.618$ represents the golden ratio. The algorithm is based on a technique called local ratio, where guarantees are first shown on a local scale and then lifted to a general bound. Through more fine-grained analysis of the proofs in \cite{mestre2010}, we are able to show a slight additive improvement over their guarantees.
\begin{restatable}{lemma}{localratioimprov}\label{lemma:local_ratio_bound}
For any instance of Data Migration with cost vector $\omega \in \mathbb{R}_{\ge 0}^B$, there is an algorithm returning a solution $S$ such that:
\begin{equation*}
    \omega(S) \qle (1+\phi)\cdot\algo{OPT} \hop{-} \mfrac{1}{2}\sum_{j \in [B]}\omega_j
\end{equation*}
\end{restatable}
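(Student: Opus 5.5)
The plan is to rerun the local ratio analysis of Mestre~\cite{mestre2010} and track an additive term that the original proof discards. Recall its structure. The algorithm repeatedly picks a vertex according to the current residual weights, and subtracts a \emph{primitive} weight function $\omega^{(i)}$ supported on the current remaining vertex set $V_i$, with values determined by the degrees in $G[V_i]$. This makes at least one vertex's residual weight zero; that vertex is put last in an ordering. The schedule is then produced greedily from this ordering: edges are scheduled so that a vertex $v$ finishes within a bounded number of steps, controlled by its own degree and by the degrees of its neighbours that precede it. Since $\omega = \sum_i \omega^{(i)}$ and both $\omega(S)$ and $\algo{OPT}$ are (super)additive in the weight function, it suffices to prove, for every primitive function,
\begin{equation*}
\omega^{(i)}(S) \qle (1+\phi)\cdot \algo{OPT}(\omega^{(i)}) \hop{-} \mfrac{1}{2}\sum_{v} \omega^{(i)}_v .
\end{equation*}
Summing over $i$ then gives the claim, because $\sum_i \omega^{(i)}_v = \omega_v$ and $\sum_i \algo{OPT}(\omega^{(i)}) \le \algo{OPT}(\omega)$.

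For a fixed primitive function, I would redo the two halves of Mestre's local estimate separately and keep the exact constants.

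\textbf{Lower bound.} In any feasible schedule, a vertex $v$ needs at least $d(v)$ steps. Over any set $U$, the completion times of the vertices are distinct enough that counting edges gives a bound of the form $\sum_{v\in U} C_v \ge \sum_v (\text{rank-type term})$. Mestre bounds this by an integral approximated from below, roughly of the form $\sum_{t} t$ replaced by $t^2/2$. I would use the exact sum instead, e.g.\ $\sum_{t=1}^{d} t = \tfrac{d^2}{2}+\tfrac{d}{2}$, or the analogous discrete count in his lemma. This yields a lower bound on $\algo{OPT}(\omega^{(i)})$ that is larger than his by an explicit positive term. Multiplied by $(1+\phi)$, this term should dominate $\tfrac12\sum_v\omega^{(i)}_v$.

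\textbf{Upper bound.} For the cost of the greedy schedule under $\omega^{(i)}$, I would check whether the bound already has an integer slack: completion time $\le$ (number of conflicting edges) rather than $+1$. If not, I would absorb the difference into the lower-bound gain above.

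The main obstacle is the bookkeeping of the lower bound. Mestre's primitive weight functions are scaled by the degrees, and his lower bound mixes several cases (whether the minimizing vertex has small or large degree relative to the rest). I must verify in each case that the gained additive term is at least $\tfrac{1}{2(1+\phi)}\sum_v \omega^{(i)}_v$. I would first handle the case of a weight function proportional to the degree on $V_i$, where the discrete sum gives exactly an extra $\tfrac12 d(v)$ per unit of degree-weight. Then I would treat the remaining case, where the weight is concentrated on a single vertex, directly: there the greedy bound and the lower bound $C_v\ge d(v)$ can be compared explicitly, and the lost $\tfrac12\omega_v$ follows from integrality of completion times.
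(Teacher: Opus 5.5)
Your outer framework matches the paper's: decompose $\omega$ into the primitive functions subtracted by the local ratio algorithm, prove a per-step bound with the additive $-\frac12\sum_v\omega^{(i)}_v$, and sum. The gap is in where you expect the local slack to come from.

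First, the primitive functions are not of the shape you assume. In Mestre's algorithm the primitive function is neither degree-proportional on the remaining vertex set nor concentrated on one vertex. It is the optimum $\omega^{\algo{LP}}$ of \cref{lp:mestre_local_ratio}, supported on $\algo{UN}(u)$ for a vertex $u$ with the maximum number $\Delta$ of unlabeled neighbors, and it can be an arbitrary nonnegative vector. So your two-case analysis does not cover the weights actually used.

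Second, and more importantly, there is nothing to sharpen. The local lower bound is already exact and discrete: $\algo{LB}(d,\omega)=\min_\sigma\sum_i\omega_i\max\{\sigma(i),d_i\}$, because the $\Delta$ edges at $u$ occupy distinct slots. The upper bound $d_i+\Delta-1$ is the exact greedy count, and the LP chooses $\omega$ to minimize $\algo{UB}/\algo{LB}$ exactly. There is no integral approximation between $\algo{LB}$ and $\omega^{(i)}(\algo{OPT})$, so the positive lower-bound gain you hope for does not exist.

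Worse, the only direct estimate for fixed $\Delta$ is $\rho_\Delta\le(1+\phi)+\frac{2}{\Delta-1}$, which lies above $1+\phi$. The bound $1+\phi$ holds only in the limit, via $\rho_\Delta<\rho_{2\Delta}$. Redoing the direct estimate with exact constants therefore starts from an additive loss, not from slack.

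The missing idea is to quantify that monotonicity, as the paper does.
\begin{itemize}
\item[1.] For a degree sequence $d$ of length $\Delta$, let $d'$ be its $k$-fold blow-up: each $d_i$ is replaced by $k$ copies of $kd_i$.
\item[2.] Take an optimal solution $(\omega',y',z')$ of \cref{lp:mestre_local_ratio} for $d'$, and aggregate each block of $k$ indices, with $\omega_i=k\sum_j\omega'_j$ and $y,z$ summed blockwise. This is feasible for $d$, since $\max\{kd_i,jk-\kappa\}\le k\max\{d_i,j\}$.
\item[3.] The scaled coefficient $k(d_i+\Delta-1)=d'_j+\Delta'-k$ falls short of the $d'$-objective coefficient $d'_j+\Delta'-1$ by $k-1$. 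Hence the aggregated objective equals $\rho(d')-\frac{k-1}{k}\sum_i\omega_i$.
\item[4.] With $\rho(d')\le 1+\phi$, this gives $\rho(d)\le(1+\phi)-\frac{k-1}{k}\sum_i\omega_i$ under the normalization $\algo{LB}\ge 1$.
\item[5.] The objective coefficients lie in $[\Delta,2\Delta-1]$ and $\omega^{\algo{LP}}$ is optimal, so $\sum_i\omega_i\ge\frac{\Delta}{2\Delta-1}\sum_i\omega^{\algo{LP}}_i$. This transfers the slack onto the weights the algorithm actually subtracts.
\item[6.] Undo the normalization using $\algo{LB}(d,\omega^{\algo{LP}})\ge 1$ and the scaling $\omega^{\algo{ALG}}=\alpha\,\omega^{\algo{LP}}$. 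Taking $k\ge 2\Delta$ gives the per-step coefficient $\frac{k-1}{k}\cdot\frac{\Delta}{2\Delta-1}\ge\frac12$.
\end{itemize}
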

Here $\algo{OPT}$ refers to the optimal cost of the underlying problem instance. For a full proof of this lemma and more details about the algorithm and the involved techniques, see \cref{sec:app:datamigration}.

\subsection{Combined Approximation}

Using our results for general Graph Scheduling combined with the improved analysis of the local ratio procedure for Data Migration, we show a slightly improved approximation guarantee. As a first step, using the results from \cref{sec:scheduling_algorithm}, we obtain the following bound.

\begin{lemma}\label{lemma:dm_tau_choice}
For any instance of Data Migration with cost vector $\omega \in \mathbb{R}_{\ge 0}^B$, there is an algorithm returning a solution $S$ such that:
\begin{equation*}
    \omega(S) \qle \Big(2 + \mfrac{28}{97}\Big)\cdot\algo{OPT} \hop{+} \mfrac{10656}{97}\sum_{j \in [B]}\omega_j
\end{equation*}
\begin{proof}
Follows directly from the bound in the \hyperlink{link:main_scheduling_alg_proof}{proof of \cref*{thm:main_scheduling_alg}} with $\tau = 97$.
\end{proof}
\end{lemma}
Combining the guarantees from \cref{lemma:local_ratio_bound} and \cref{lemma:dm_tau_choice} leads to the following purely multiplicative bound, which is an improvement over the best known $(1+\phi)$-approximation.

\dmapproxresult*

\begin{proof}
The procedure is straightforward: Run both the algorithms from \cref{lemma:local_ratio_bound} and \cref{lemma:dm_tau_choice} independently and return the schedule with lower cost. Let $S_1$ and $S_2$ be the respective schedules returned by the algorithm. The overall cost is then given by $\min\{\omega(S_1),\omega(S_2)\}$.

If instead of returning the schedule with lower cost, we return the first schedule with some probability $\lambda \in [0,1]$ and the second with probability $1-\lambda$, we obtain a random schedule $S_R$ with cost at least as high.
Bounding the costs, we obtain:
\begin{align*}
    \min\{\omega(S_1),\omega(S_2)\} &\qle \mathbb{E}[\omega(S_R)] \\
    ~&\qle \lambda \omega(S_1) \hop{+} (1-\lambda)\omega(S_2)\\
    ~&\qeq \left(\lambda(1+\phi) + (1-\lambda)\mfrac{222}{97}\right) \algo{OPT} \hop{+} \left(-\frac{\lambda}{2} + (1-\lambda)\mfrac{10656}{97}\right)\sum_{j \in [B]}\omega_j
\end{align*}
For the choice $\lambda = \frac{21312}{21409}$, the additive term vanishes and the first term gives the desired multiplicative factor $2.6165\dots < 2.617$
\end{proof}

\bibliographystyle{plainurl}
\bibliography{references}

@inproceedings{agarwal_sincronia_2018,
	title        = {Sincronia: near-optimal network design for coflows},
	shorttitle   = {Sincronia},
	author       = {Agarwal, Saksham and Rajakrishnan, Shijin and Narayan, Akshay and Agarwal, Rachit and Shmoys, David and Vahdat, Amin},
	year         = {2018},
	booktitle    = {Proceedings of SIGCOMM},
	publisher    = {ACM},
	pages        = {16--29},
	doi          = {10.1145/3230543.3230569},
	isbn         = {978-1-4503-5567-4}
}

@incollection{ahmadi_scheduling_2017,
	title        = {On {Scheduling} {Coflows}},
	shorttitle   = {On {Scheduling} {Coflows}},
	author       = {Ahmadi, Saba and Khuller, Samir and Purohit, Manish and Yang, Sheng},
	year         = {2017},
	booktitle    = {Proceedings of IPCO},
	publisher    = {Springer International Publishing},
	volume       = {10328},
	pages        = {13--24},
	doi          = {10.1007/978-3-319-59250-3_2},
	isbn         = {978-3-319-59250-3},
	note         = {Series Title: Lecture Notes in Computer Science}
}

@incollection{Bar-Yehuda1985-se,
	title        = {A local-ratio theorem for approximating the weighted vertex cover problem},
	author       = {Bar-Yehuda, Reuven and Even, Shimon},
	year         = {1985},
	booktitle    = {Analysis and Design of Algorithms for Combinatorial Problems},
	publisher    = {Elsevier},
	series       = {North-Holland mathematics studies},
	pages        = {27--45},
	doi          = {10.1016/S0304-0208(08)73101-3}
}

@article{Bar-Yehuda2004-vk,
	title        = {Local ratio},
	author       = {Bar-Yehuda, Reuven and Bendel, Keren and Freund, Ari and Rawitz, Dror},
	year         = {2004},
	journal      = {ACM Comput. Surv.},
	publisher    = {Association for Computing Machinery (ACM)},
	volume       = {36},
	number       = {4},
	pages        = {422--463},
	doi          = {10.1145/1041680.1041683}
}

@article{Cao2019-ow,
	title        = {Graph edge coloring: A survey},
	author       = {Cao, Yan and Chen, Guantao and Jing, Guangming and Stiebitz, Michael and Toft, Bjarne},
	year         = {2019},
	journal      = {Graphs Comb.},
	publisher    = {Springer Science and Business Media LLC},
	volume       = {35},
	number       = {1},
	pages        = {33--66},
	doi          = {10.1007/s00373-018-1986-5}
}

@article{Chen2025_goldberg,
	title        = {Proof of the {Goldberg--Seymour} conjecture on edge--colorings of multigraphs},
	author       = {Chen, Guantao and Jing, Guangming and Zang, Wenan},
	year         = {2025},
	journal      = {J. Comb. Optim.},
	publisher    = {Springer Science and Business Media LLC},
	volume       = {50},
	number       = {3},
	doi          = {10.1007/s10878-025-01348-6},
	copyright    = {https://creativecommons.org/licenses/by/4.0}
}

@inproceedings{chowdhury_efficient_2014,
	title        = {Efficient coflow scheduling with {Varys}},
	author       = {Chowdhury, Mosharaf and Zhong, Yuan and Stoica, Ion},
	year         = {2014},
	booktitle    = {Proceedings of {SIGCOMM}},
	publisher    = {ACM},
	pages        = {443--454},
	doi          = {10.1145/2619239.2626315},
	isbn         = {978-1-4503-2836-4}
}

@article{Dean2008,
	title        = {MapReduce: simplified data processing on large clusters},
	author       = {Dean,  Jeffrey and Ghemawat,  Sanjay},
	year         = {2008},
	journal      = {Communications of the ACM},
	publisher    = {Association for Computing Machinery (ACM)},
	volume       = {51},
	number       = {1},
	pages        = {107–113},
	doi          = {10.1145/1327452.1327492},
	issn         = {1557-7317}
}

@article{edmonds1965maximum,
	title        = {Maximum matching and a polyhedron with 0, 1-vertices},
	author       = {Edmonds, Jack},
	year         = {1965},
	journal      = {Journal of Research of the National Bureau of Standards B},
	volume       = {69},
	pages        = {125--130}
}

@inproceedings{fukunaga22,
	title        = {{Integrality Gap of Time-Indexed Linear Programming Relaxation for Coflow Scheduling}},
	author       = {Fukunaga, Takuro},
	year         = {2022},
	booktitle    = {Proceedings of APPROX/RANDOM},
	publisher    = {Schloss Dagstuhl -- Leibniz-Zentrum f{\"u}r Informatik},
	address      = {Dagstuhl, Germany},
	series       = {LIPIcs},
	volume       = {245},
	pages        = {36:1--36:13},
	doi          = {10.4230/LIPIcs.APPROX/RANDOM.2022.36},
	isbn         = {978-3-95977-249-5},
	issn         = {1868-8969},
	urn          = {urn:nbn:de:0030-drops-171581}
}

@article{Gandhi2009-edgecolor,
	title        = {Combinatorial algorithms for data migration to minimize average completion time},
	author       = {Gandhi, Rajiv and Mestre, Juli{\'a}n},
	year         = {2009},
	journal      = {Algorithmica},
	publisher    = {Springer Science and Business Media LLC},
	volume       = {54},
	number       = {1},
	pages        = {54--71},
    doi          = {10.1007/s00453-007-9118-2},
	copyright    = {https://creativecommons.org/licenses/by-nc/2.0}
}

@article{Gandhi2013_datamigration,
	title        = {Corrigendum: Improved Results for Data Migration and Open Shop Scheduling},
	author       = {Gandhi, Rajiv and Halld{\'o}rsson, Magn{\'u}s M and Kortsarz, Guy and Shachnai, Hadas},
	year         = {2013},
	journal      = {ACM Trans. Algorithms},
	publisher    = {Association for Computing Machinery (ACM)},
	volume       = {9},
	number       = {4},
	pages        = {1--7},
	doi          = {10.1145/2500123}
}

@inproceedings{Hadoop2010,
	title        = {The Hadoop Distributed File System},
	author       = {Shvachko, Konstantin and Kuang, Hairong and Radia, Sanjay and Chansler, Robert},
	year         = {2010},
	booktitle    = {Proceedings of IEEE MSST},
	pages        = {1--10},
	doi          = {10.1109/MSST.2010.5496972}
}

@article{Halldorsson2003-edgecolor,
	title        = {Sum coloring interval and k-claw free graphs with application to scheduling dependent jobs},
	author       = {Halld{\'o}rsson, Magn{\'u}s M and Kortsarz, Guy and Shachnai, Hadas},
	year         = {2003},
	journal      = {Algorithmica},
	publisher    = {Springer Science and Business Media LLC},
	volume       = {37},
	number       = {3},
	pages        = {187--209},
    doi          = {10.1007/s00453-003-1031-8}
}

@incollection{Halldorsson2008-msec,
	title        = {Min sum edge coloring in multigraphs via configuration {LP}},
	author       = {Halld{\'o}rsson, Magn{\'u}s M and Kortsarz, Guy and Sviridenko, Maxim},
	year         = {2008},
	booktitle    = {Integer Programming and Combinatorial Optimization},
	publisher    = {Springer Berlin Heidelberg},
	pages        = {359--373},
    doi          = {10.1007/978-3-540-68891-4_25}
}

@article{holyer1981_edgecolorhard,
	title        = {The NP-Completeness of Edge-Coloring},
	author       = {Holyer, Ian},
	year         = {1981},
	journal      = {SIAM Journal on Computing},
	volume       = {10},
	number       = {4},
	pages        = {718--720},
	doi          = {10.1137/0210055}
}

@inproceedings{im19,
	title        = {Matroid {Coflow} {Scheduling}},
	author       = {Im, Sungjin and Moseley, Benjamin and Pruhs, Kirk and Purohit, Manish},
	year         = {2019},
	booktitle    = {Proceedings of ICALP},
	publisher    = {Schloss Dagstuhl–Leibniz-Zentrum fuer Informatik},
	series       = {{LIPIcs}},
	volume       = {132},
	pages        = {145:1--145:14},
	doi          = {10.4230/LIPIcs.ICALP.2019.145},
	isbn         = {978-3-95977-109-2},
	note         = {ISSN: 1868-8969}
}

@misc{jing2024edgecoloringmultigraphs,
	title        = {On Edge Coloring of Multigraphs},
	author       = {Guangming Jing},
	year         = {2024},
	eprint       = {2308.15588},
	archiveprefix = {arXiv},
	primaryclass = {math.CO}
}

@article{khuller_select_2019,
	title        = {Select and permute: {An} improved online framework for scheduling to minimize weighted completion time},
	author       = {Khuller, Samir and Li, Jingling and Sturmfels, Pascal and Sun, Kevin and Venkat, Prayaag},
	year         = {2019},
	journal      = {Theoretical Computer Science},
	volume       = {795},
	pages        = {420--431},
	doi          = {10.1016/j.tcs.2019.07.026},
	note         = {Publisher: Elsevier BV}
}

@article{König1916,
	title        = {Über Graphen und ihre Anwendung auf Determinantentheorie und Mengenlehre},
	author       = {König, Dénes},
	year         = {1916},
	journal      = {Mathematische Annalen},
	volume       = {77},
	pages        = {453--465},
	doi          = {10.1007/BF01456961}
}

@misc{megow2025groupschedule,
	title        = {Unifying Scheduling Algorithms for Group Completion Time},
	author       = {Alexander Lindermayr and Zhenwei Liu and Nicole Megow},
	year         = {2025},
	eprint       = {2501.17682},
	archiveprefix = {arXiv},
	primaryclass = {cs.DS}
}

@article{mestre2010,
	title        = {Adaptive Local Ratio},
	author       = {Mestre, Juli{\'a}n},
	year         = {2010},
	journal      = {SIAM J. Comput.},
	publisher    = {Society for Industrial \& Applied Mathematics (SIAM)},
	volume       = {39},
	number       = {7},
	pages        = {3038--3057},
	doi          = {10.1137/080731712}
}

@article{padberg82_separation,
	title        = {Odd Minimum Cut-Sets and b-Matchings},
	author       = {Manfred W Padberg and Mendu R Rao},
	year         = {1982},
	journal      = {Mathematics of Operations Research},
	publisher    = {INFORMS},
	volume       = {7},
	number       = {1},
	pages        = {67--80},
	issn         = {0364765X, 15265471}
}

@inproceedings{rohwedder2025_coflow,
	title        = {{3.415-Approximation for Coflow Scheduling via Iterated Rounding}},
	author       = {Rohwedder, Lars and Schnaars, Leander},
	year         = {2025},
	booktitle    = {Proceedings of ICALP},
	publisher    = {Schloss Dagstuhl–Leibniz-Zentrum fuer Informatik},
	series       = {LIPIcs},
	volume       = {334},
	pages        = {128:1--128:19},
	doi          = {10.4230/LIPIcs.ICALP.2025.128},
	isbn         = {978-3-95977-372-0},
	issn         = {1868-8969},
	urn          = {urn:nbn:de:0030-drops-235050}
}

@inproceedings{Sachdeva2013,
	title        = {Optimal Inapproximability for Scheduling Problems via Structural Hardness for Hypergraph Vertex Cover},
	author       = {Sachdeva,  Sushant and Saket,  Rishi},
	year         = {2013},
	booktitle    = {2013 IEEE Conference on Computational Complexity},
	publisher    = {IEEE},
	pages        = {219–229},
	doi          = {10.1109/ccc.2013.30}
}

@phdthesis{scheide2009_edgecolouring,
	title        = {Edge Colourings of Multigraphs},
	author       = {Scheide, Diego},
	year         = {2009},
	day          = {24},
	school       = {Technische Universität Ilmenau}
}

@book{schrijver1977laminar,
	title        = {A proof of total dual integrality of matching polyhedra},
	author       = {Schrijver, Alexander and Seymour, Paul D},
	year         = {1977},
	publisher    = {Stichting Mathematisch Centrum. Zuivere Wiskunde}
}

@book{Schrijver2002_combopt,
	title        = {Combinatorial Optimization},
	author       = {Schrijver, Alexander},
	year         = {2002},
	publisher    = {Springer},
	address      = {Berlin, Germany},
	series       = {Algorithms and Combinatorics},
	edition      = {2003}
}

@article{Shafiee2018-ik,
	title        = {An improved bound for minimizing the total weighted completion time of coflows in datacenters},
	author       = {Shafiee, Mehrnoosh and Ghaderi, Javad},
	year         = {2018},
	journal      = {IEEE ACM Trans. Netw.},
	publisher    = {Institute of Electrical and Electronics Engineers (IEEE)},
	volume       = {26},
	number       = {4},
	pages        = {1674--1687},
	doi          = {10.1109/TNET.2018.2845852},
	copyright    = {https://ieeexplore.ieee.org/Xplorehelp/downloads/license-information/IEEE.html}
}

@article{Shannon1949,
	title        = {A theorem on coloring the lines of a network},
	author       = {Shannon, Claude},
	year         = {1949},
	journal      = {J. Math. Phys.},
	publisher    = {Wiley},
	volume       = {28},
	number       = {1-4},
	pages        = {148--152},
	copyright    = {http://onlinelibrary.wiley.com/termsAndConditions\#vor}
}

@inproceedings{Spark2010,
	title        = {Spark: cluster computing with working sets},
	author       = {Zaharia, Matei and Chowdhury, Mosharaf and Franklin, Michael J. and Shenker, Scott and Stoica, Ion},
	year         = {2010},
	booktitle    = {Proceedings of the 2nd USENIX Conference on Hot Topics in Cloud Computing},
	location     = {Boston, MA},
	publisher    = {USENIX Association},
	address      = {USA},
	pages        = {10},
	numpages     = {1}
}

@article{Tardos1986,
	title        = {A Strongly Polynomial Algorithm to Solve Combinatorial Linear Programs},
	author       = {Tardos,  {\'{E}}va},
	year         = {1986},
	journal      = {Operations Research},
	publisher    = {Institute for Operations Research and the Management Sciences (INFORMS)},
	volume       = {34},
	number       = {2},
	pages        = {250–256},
	doi          = {10.1287/opre.34.2.250},
	issn         = {1526-5463}
}

@article{toft2021_edgecoloring,
	title        = {A brief history of edge-colorings - with personal reminiscences},
	author       = {Toft, Bjarne and Wilson, Robin},
	year         = {2021},
	journal      = {Discrete Mathematics Letters},
	volume       = {6},
	pages        = {38--46},
	doi          = {10.47443/dml.2021.s105},
	issn         = {2664-2557}
}

@article{Vizing1964,
	title        = {On an estimate of the chromatic class of a p-graph},
	author       = {Vizing, Vadim},
	year         = {1964},
	journal      = {Diskret. Analiz},
	volume       = {3},
	pages        = {25--30}
}

\section{Graph Theory and Edge Colorings}\label{sec:app:graph_theory}
In this section we briefly review relevant graph edge coloring theory, which is used to turn edge to block assignments into valid schedules.
We use $\chi'(G)$ to refer to the chromatic index of a graph $G$, so the minimum number of colors required to properly edge color $G$.
\subsection{Density of a Graph}\label{sec:density}
In our case density refers to edge colorings of the graph and should not be confused with the more common notion of density which relates the number of vertices and number of edges in a graph.

Consider a graph $G=(V,E)$ and a valid edge coloring $c:E \rightarrow [k]$. It is clear by definition that every one of the $k$ color classes $E_1,\dots,E_k$ has to form a matching in $G$, so it immediately follows that $|E_i| \le \lfloor\tfrac{1}{2}|V|\rfloor$. The same relation has to hold for every subgraph $H \subseteq G$ with at least $2$ vertices, meaning that $|E_i[H]| \le \lfloor\tfrac{1}{2}|V[H]|\rfloor$. Because the color classes in a valid coloring form a partition of the edge set, it follows that $|E[H]| \le k \cdot \lfloor\tfrac{1}{2}|V[H]|$. As this has to hold for any subgraph with at least $2$ vertices, by rearranging we obtain a lower bound on $k$, or more specifically a lower bound on the chromatic index of $G$. This is called the density $\mathcal{W}(G)$:
\begin{equation*}
    \mathcal{W}(G) \qeq \max_{H \subseteq G: |V[H]| \ge 2} \left\lceil \frac{|E[H]|}{\lfloor \tfrac{1}{2}|V[H]|\rfloor}\right\rceil
\end{equation*}
As just observed, for any graph $G$ there holds $\mathcal{W}(G) \le \chi'(G)$. Note that sometimes $\Gamma$ is used in place of $\mathcal{W}$ to denote the density. For $\mathcal{W}(G) \ge \Delta(G)$, the density of a general multigraph can be determined in polynomial time \cite{Schrijver2002_combopt}. For $\mathcal{W}(G) < \Delta(G)$ this is open, but this case is not relevant to our analysis.

\subsection{LP Bounds and Density}\label{sec:lpbound_density}
As shown in \cref{sec:iter_lp_round}, we can find an integral point which is feasible for \cref{lp:main_gg} with some slightly weakened constraints. More specifically, for some appropriate choice of $k,\tau \in \mathbb{N}$, we can find a solution $x$ such that for every $b \in [B]$ and $U \subseteq V$, the following constraint is valid:
\begin{equation*}
    \sum_{e \in E[U]}x_e \qle (k_b\cdot \tau + 2(k-1))\cdot \lfloor \tfrac{1}{2}|U|\rfloor \hop{-}1
\end{equation*}
Let $G_{b,x}$ be the graph obtained by only including the edges of block $b$ which are integral in the LP solution. By rearranging above equation, we obtain that $\mathcal{W}(G_{b,x}) \le k_b\cdot \tau + 2(k-1)$. From the bound on the vertex constraint violation, we additionally know that $\Delta(G_{b,x}) \le k_b \cdot \tau  + k - 2$. So this provides upper bounds for both the maximum degree and the density of the induced graph in each block.
For the choice $k=8$ from \cref{lemma:iter_round_violation_bound}, this yields $\Delta(G_{b,x}) \le k_b \cdot \tau + 6$ and $\mathcal{W}(G_{b,x}) \le k_b \cdot \tau + 14$. 

\subsection{Efficiently Realizable Upper Bounds on the Chromatic Index}\label{sec:eff_realizable_bounds}
Determining the exact chromatic index $\chi'(G)$ is an $\algo{NP}$-hard problem and in fact it is even hard to approximate strictly better than a factor of $\frac{4}{3}$, assuming $\algo{P} \neq \algo{NP}$ \cite{holyer1981_edgecolorhard}. While for simple graphs there exists an algorithm which always returns a coloring using at most $\Delta + 1$ colors, for multigraphs the situation is more complicated, as discussed in \cref{sec:intro:ggextension}.\\

We call a function $f$ which maps graphs to integers an efficiently realizable upper bound if for any graph $G$ we can find an integral edge coloring using at most $f(G)$ colors in time polynomial in $|V|$ and $|E|$. So for simple graphs, $f(G) = \Delta(G)+1$ is an efficiently realizable upper bound and for general graphs, $f(G) = \tfrac{3}{2}\Delta(G)$ is another example. This concept is closely related to the well known Goldberg-Seymour conjecture, which states that for a general multigraph $G$ it holds that $\chi'(G) \le \max\{\Delta(G)+1\ ,\ \mathcal{W}(G) \}$.
In recent work, the authors of \cite{Chen2025_goldberg} gave a non-constructive proof of the Goldberg-Seymour conjecture, with one of the authors later also providing an algorithm realizing this bound \cite{jing2024edgecoloringmultigraphs}. Thus, given a general multigraph $G$, $f(G) := \max\{\Delta(G)+1\ ,\ \mathcal{W}(G) \}$ is an efficiently realizable upper bound. As $\Delta(G)$ and $\mathcal{W}(G)$ are both lower bounds on the chromatic index, this result implies that one can in polynomial time find a coloring using at most one additional color compared to the optimum. From the $\algo{NP}$-hardness of determining the chromatic index it follows that this result is best possible.
Note that the algorithm in \cite{jing2024edgecoloringmultigraphs} is currently not peer-reviewed yet. However, there are long established slightly weaker constructive guarantees, such as contained in \cite{scheide2009_edgecolouring}, which could be applied instead and only worsen the resulting guarantees by a bit. They show that $f(G) = \max\{\lfloor\frac{15}{14}\Delta(G) + \frac{12}{14}\rfloor,\ \mathcal{W}(G)\}$ and $f(G) = \max\{\Delta(G),\mathcal{W}(G)\} + \sqrt{\tfrac{1}{2}\max\{\Delta(G),\mathcal{W}(G)\}}$ are efficiently realizable upper bounds. For all relevant values of $\Delta(G)$ and $\mathcal{W}(G)$, the minimum of the two terms is only worse than the Goldberg-Seymour bound by at most a small additive constant. In \cref{lemma:alg_delay_bound}, the increase of each block's size through the procedure is upper bounded by $14$, so a small additional additive loss through a potential worse edge coloring only slightly increases this and thus does not affect approximation guarantees much. In fact, for the vertex degree bounds and density bounds derived in \cref{lemma:iter_round_violation_bound}, for the choices of $\tau$ in \cref{theorem:main_dm_result} and \cref{theorem:graph_scheduling_approx} the weaker guarantees yield the same results, as the density term still dominates. The asymptotic guarantee of $(2+\epsilon)$ is specifically also not affected at all and only requires a bit larger choice of $\tau$ in the corresponding application of $\algo{GBF}^\tau$ to recover the result.

\subsection{Delay Bound}

Using the results of \cref{sec:lpbound_density} and \cref{sec:eff_realizable_bounds}, we describe how to find an edge to time slot assignment given the block assignment and show upper bounds on the delay. We first consider each block separately, before lifting these partial results to overall bounds.

\subsubsection*{Block Size}

For some fixed $b \in [B]$ and an integral solution $x$ as provided by the rounding procedure from \cref{sec:iter_lp_round}, consider the subgraph $G_{b,x}$ consisting of the edges assigned to block $b$. From \cref{lemma:iter_round_violation_bound}, we know that $\Delta(G_{b,x}) \le k_b\cdot\tau +6$ and $\mathcal{W}(G_{b,x}) \le k_b\cdot\tau +14$. From \cref{sec:eff_realizable_bounds} it follows that in polynomial time we can find a coloring of $G_{b,x}$ using at most
\begin{equation*}
\max\{\Delta(G_{b,x}) +1\ ,\  \mathcal{W}(G_{b,x})\}
\end{equation*}
colors. To obtain worst case guarantees it suffices to focus on the second term, as it will strictly dominate the first term under worst case assumptions.

\subsubsection*{Combined Guarantees}
Combining all the previous results leads to the following delay guarantee, where $\algo{GBF}(C_j)$ is the finishing time of the $j$-th job in the schedule returned by the algorithm. The analysis is similar to the one seen in \cite{rohwedder2025_coflow}.

\algdelaybound*

\begin{proof}
The algorithm works by first rounding up all deadlines $C_1,\dots, C_B$ to the next multiple of $\tau$ to obtain the deadlines $\bar{C}_1, \dots \bar{C}_B$. As before, for $b \in [B]$, let the block sizes be given by $k_b \cdot \tau$, for $k_b \in \mathbb{N}_{\ge1}$ and we assume without loss of generality that no two rounded deadlines coincide. Using this construction, from \cref{lemma:iter_round_violation_bound} we obtain that we can find an integral edge to block assignment where each block $G_{b,x}$ has degree at most $k_b \cdot\tau + 6$ and density at most $k_b \cdot\tau + 14$. Hence, using the results from \cref{sec:eff_realizable_bounds}, we can also find a coloring, so an edge to time slot assignment, using at most $k_b \cdot\tau + 14$ time slots.
For some fixed $j \in [B]$, define $a \in [0,\tau)$ and $k \in \mathbb{N}$ such that $C_j = k\cdot \tau + a$. Assume $a > 0$ for now, then $\bar{C}_j = (k+1) \cdot \tau$. The size of each previous block and the block belonging to $C_j$ increases to at most $\tau + 14$, so the final time slot of the block after the rounding and the slot assignment is latest at time point $j \cdot (\tau+14)$. As each block has size at least $\tau$, we have $j \le k+1$ and thus $j\cdot (\tau+14) \ \le\  (k+1)\cdot (\tau+14)$. So we obtain:
\begin{equation*}
    \algo{GBF}^\tau(C_j) \hle (k+1)\cdot (\tau+14)\heq \mfrac{(\tau+14)}{\tau} \cdot \tau \cdot k + \tau + 14 \hle \mfrac{\tau+14}{\tau}C_j + \tau + 14
\end{equation*}
For $a = 0$ the derivation is analogously, just with a stronger starting upper bound on the finishing time.
\end{proof}

\section{2-Approximate Deadlines}\label{sec:app:2appround}
In order to employ \cref{lp:main_gg} and the related rounding results algorithmically, deadlines with good cost guarantees are required. By a generalization from the so called Concurrent Open Shop Scheduling, it is known that the deadlines cannot be approximated strictly better than a factor of $2$, assuming $\algo{P} \neq \algo{NP}$ \cite{Sachdeva2013}. Algorithms to determine such deadlines have been studied in various contexts and by combining two partial results, we obtain the following bound which shows that tight approximations are possible and that we can even gain a slight additive improvement over the guarantee.

\begin{lemma}\label{lemma:app:two_approx}
For a given instance of Graph Scheduling, there is a polynomial time algorithm determining deadlines $C_1,\dots,C_B$ for which \cref{lp:main_gg} is feasible and for which the following cost bound holds with respect to the optimal cost $\algo{OPT}$ of the underlying instance:
\begin{equation*}
\sum_{j \in [B]} \omega_j C_j \qle  2\cdot \algo{OPT} \hop{-}  \sum_{j \in [B]}\omega_j
\end{equation*}
\end{lemma}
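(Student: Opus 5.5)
The plan is to follow the standard "primal-dual / greedy from the back" deadline algorithm used for Concurrent Open Shop and Coflow Scheduling \cite{im19,fukunaga22}, adapted so that the load certificates are matching-polytope loads rather than just vertex degrees. The first ingredient is a lower bound on $\algo{OPT}$. For any set $S \subseteq [B]$ of jobs, let $E(S)=\bigcup_{j\in S} J_j$. For any vertex $v$, all edges of $E(S)$ at $v$ occupy distinct time slots, and for any odd $U$, at most $\lfloor\frac12|U|\rfloor$ edges of $E(S)[U]$ can be scheduled per slot. Hence the $m$ edges must occupy slots whose counts force the following. If $p(S)$ is the maximum over vertices of the degree in $E(S)$, and over odd sets of $\lceil |E(S)[U]|/\lfloor\frac12|U|\rfloor\rceil$, then any schedule finishes the jobs of $S$ in an order where the last job of $S$ completes at time at least $p(S)$. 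More strongly, we have the "parallel inequality" $\sum_{j\in S} p_j C^{\mathrm{OPT}}_j \ge \frac12(p(S)^2+p(S))$ for each single resource (a vertex $v$ or an odd set $U$ with scaled loads $p_{j}=|J_j\cap\ldots|$). The $+p(S)$ term, coming from integrality of time slots ($1+2+\dots+p$), is what will produce the additive $-\sum_j\omega_j$.

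Next, I will run the greedy ordering algorithm. Iteratively, from the last position backwards, consider the remaining job set $S$ and pick the resource $r$ (a vertex, or an odd set found by Padberg--Rao separation, or better the maximizer of the density of the graph on $E(S)$, computable in polynomial time as noted in \cref{sec:density}) with maximum load $p(S)$. Among the jobs in $S$, place last the job $j$ minimizing the ratio $\omega_j/p_{j,r}$, with the usual dual-fitting weight reduction $\omega_i \leftarrow \omega_i - (\omega_j/p_{j,r})\,p_{i,r}$. Set $C_j := p(S)$. Feasibility of \cref{lp:main_gg} for these deadlines then follows from the matching polytope. The edges of $E(S)$ restricted to jobs with deadline $\le C_j$ have degree and density at most $C_j$, so by Edmonds' description (the fractional chromatic index equals $\max\{\Delta,\text{fractional density}\}$) they can be fractionally packed into $C_j$ units of matching polytope. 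Doing this nested across prefixes and distributing into blocks in order of $b(e)$ gives a point satisfying all block constraints. I expect this to be a delicate but routine step: I would argue by induction on blocks that the edges with $b(e)\le b$ fit fractionally into $[0,C_b]$, using that fractional edge coloring (a convex combination of matchings) can be truncated at time boundaries.

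For the cost bound, I will use the dual-fitting/local-ratio argument. Each greedy step defines a dual variable $y_{S,r}=\omega_j/p_{j,r}$, and the algorithm's cost equals $\sum y_{S,r}\,p(S)\,p_r(S)$-type terms. These are compared to $\sum y_{S,r}\cdot\frac12(p(S)^2+p(S))$, which is at most $\algo{OPT}$ by weak duality with the parallel inequalities. This gives cost $\le 2\algo{OPT}-\sum y_{S,r}p(S)$. Finally, the telescoping of the weight reductions shows $\sum_{S,r} y_{S,r}p_r(S)\ge\sum_j\omega_j$ when every job is nonempty, which yields the subtraction of $\sum_j\omega_j$. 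The main obstacle is keeping track of this exact additive term: the parallel inequality with the $+p(S)$ correction has to hold for odd-set resources with the rounded-up density, and the resource chosen at each step must be one for which the ratio argument applies. I would first try to treat odd sets as scaled single machines with per-slot capacity $\lfloor\frac12|U|\rfloor$, and verify that the $\frac12(p^2+p)$ bound survives the ceiling.
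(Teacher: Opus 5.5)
The proof has a genuine gap at the feasibility step. The paper does not use a primal--dual ordering; it gets the deadlines from an LP over the (downward-closed) matching polytope, using the randomized rounding of \cite{megow2025groupschedule} and the refined analysis of \cite{rohwedder2025_coflow} for the $-\sum_j\omega_j$. In your argument, knowing that every prefix $E(S_j)$ has maximum degree and fractional density at most $C_j$ does \emph{not} make \cref{lp:main_gg} feasible. The fractional colorings certifying different prefixes need not be nested, and truncating a coloring of a larger prefix at $C_{b-1}$ need not cover the smaller prefix. A bipartite path with disjoint jobs already breaks it. Take edges $a=w_1u$, $b=uw$, $c=wz$, $d=zy$, jobs $J_1=\{a,d\}$ and $J_2=\{b,c\}$, and $\omega_1>\omega_2$. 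Every maximum-load resource of $E(\{1,2\})$ (load $2$) gives job $2$ the smaller ratio, so your greedy outputs $C_1=1$, $C_2=2$, and both prefix conditions hold. Yet block $1$ has length $1$ and must contain $a$ and $d$ completely, which saturates $u$ and $z$. Hence $b$ and $c$ both go to block $2$, also of length $1$, and overload $w$; indeed no schedule finishes $J_1$ by $1$ and $J_2$ by $2$. The missing idea is to read all deadlines off one fractional schedule. Solve a time-indexed LP, draw a single $\lambda\in(0,1]$ common to all jobs, and let $C_j$ be the $\lambda$-point of job $j$ divided by $\lambda$. Replaying each LP time step for $1/\lambda$ time units then certifies feasibility of \cref{lp:main_gg} for all blocks at once, and drawing $\lambda$ with density $2\lambda$ gives the factor $2$ in expectation.

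Separately, your dual fitting assumes the jobs are disjoint, but here they may share edges. In Data Migration, where this lemma is applied, every edge lies in two jobs. Your parallel inequality with the union load $p_r(S)$ on the right-hand side (e.g.\ $|E(S)\cap\delta(v)|$ for $r=v$) is valid. However, the cost step bounds $\sum_{j\in S}p_{j,r}C_j$ by $p(S)\sum_{j\in S}p_{j,r}$ and then needs $\sum_{j\in S}p_{j,r}=p_r(S)$. With overlaps this sum can be $2p_r(S)$; for instance $\sum_j p_{j,v}=2\deg(v)$ in Data Migration, so you only get a factor of about $4$. Likewise, the telescoping identity gives $\sum_{S,r}y_{S,r}\sum_{j\in S}p_{j,r}=\sum_j\omega_j$, not the bound on $\sum_{S,r}y_{S,r}\,p_r(S)$ that you subtract. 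On your worry about odd sets: the $\tfrac12(q^2+q)$ bound does hold for the unrounded scaled load $q=|E(S)[U]|/\lfloor\tfrac12|U|\rfloor$, but it fails for $\lceil q\rceil$. The LP-based rounding avoids both problems, since it works edge by edge and produces the feasibility witness together with the deadlines.
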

\begin{proof}
The proof follows by combining two known results from the literature. In \cite{megow2025groupschedule} they show that for any downward closed polytope, using a randomized rounding scheme, one can find $(2+\epsilon)$-approximate deadlines such that the corresponding LP is feasible. The additive $\epsilon$ can be omitted in case all entries are polynomially bounded. The \cref{lp:main_gg} fits into their setting, as the degree and odd-set constraints are downward closed, hence we can obtain such $2$-approximate deadlines. The polynomial time solvability of the involved LP follows from the well known separation procedure for odd-set inequalities. The same rounding idea has also been used in previous algorithms for Coflow Scheduling \cite{im19,fukunaga22,rohwedder2025_coflow}. In \cite{rohwedder2025_coflow} the analysis of the rounding procedure in the context of Coflow Scheduling has been improved, yielding the additive $- \sum_{j \in [B]}\omega_j$ improvement. This proof directly transfers to our general version, as no properties exclusive to the more restricted case are used.
\end{proof}

\section{Improved Graph Scheduling Approximation}\label{sec:improved_graph_sched_approx}
Using the allocation guarantee from \cref{lemma:alg_delay_bound} together with guarantees from the literature, \cref{theorem:graph_scheduling_approx} can be shown, which gives a slight improvement on the previously best known approximation ratio of $4$ for Graph Scheduling. Note that as mentioned before, this ratio is not optimized and should be straightforward to improve using similar ideas as contained in \cite{rohwedder2025_coflow}.

\graphschedulingapprox*

\begin{proof}
We pick two algorithms which we run independently and return the schedule with lower cost.
From the \hyperlink{link:main_scheduling_alg_proof}{proof of \cref*{thm:main_scheduling_alg}} , we know that we can find a solution $S_1$ with cost:
\begin{equation*}
    \omega(S_1) \qle 2\cdot\frac{\tau+14}{\tau}\cdot \algo{OPT} \hop{+} (\tau+14 - \tfrac{\tau+14}{\tau})\sum_{j \in [B]}\omega_j
\end{equation*}
From the algorithms for Coflow Scheduling contained in \cite{ahmadi_scheduling_2017, Shafiee2018-ik}, combined with refined analysis from \cite{rohwedder2025_coflow}, it follows that there is an algorithm returning a solution $S_2$ with cost:
\begin{equation*}
    \omega(S_2) \qle 4 \cdot \algo{OPT} \hop{-} 3\sum_{j \in [B]}\omega_j
\end{equation*}
Similar to the proof of \cref{theorem:main_dm_result}, we compare the minimum cost of the two schedules to the cost of a randomized algorithm \algo{RALG} which runs the first algorithm with probability $\lambda_1$ and the second with probability $\lambda_2$.
For $\tau := 34$ and $\lambda_1 := \frac{17}{281}$ and $\lambda_2 := \frac{264}{281}$, this gives the desired bound:
\begin{align*}
    \min\{\omega(S_1)\ ,\ \omega(S_2)\} &\qle \mathbb{E}[\omega(S_{\algo{RALG}})]\\
    ~& \qeq \lambda_1 \omega(S_1) \hop{+} \lambda_2 \omega(S_2)\\
    ~&\qeq \mfrac{17}{281}\Big(\mfrac{48}{17}\cdot\algo{OPT} \hop{+} \big(48 - \tfrac{24}{17}\big)\sum_{j \in [B]}\omega_j\Big) \\
    ~&\phantom{=.}\hop{+} \mfrac{264}{281}\Big(4 \cdot \algo{OPT} \hop{-} 3\sum_{j \in [B]}\omega_j\Big)\\
    ~&\qeq \mfrac{1104}{281} \cdot \algo{OPT} \qedhere
\end{align*}
\end{proof}

\section{Local Ratio for Data Migration}\label{sec:app:datamigration}
In this section we give a brief overview over the ideas and techniques used in Local Ratio proofs and more specifically show how the technique has been applied to obtain the so far best approximation ratio of $1+\phi$ ($\approx 2.618$) for Data Migration in \cite{mestre2010}. More comprehensive background information and additional examples can be found in works such as \cite{Bar-Yehuda1985-se,Bar-Yehuda2004-vk}.
\subsection{Local Ratio Concept}
Assume that we are given some maximization or minimization problem with a weight vector $\omega \in \mathbb{R}^n_{\ge 0}$. We present the idea for the minimization setting, the maximization case follows analogously. Assume that we have some decomposition of $\omega$ into vectors $\omega_1,\dotsc,\omega_{\ell} \in \mathbb{R}^n_{\ge 0}$, meaning that $\omega = \sum_{j \in [\ell]}\omega_j$. Let $S$ be some feasible solution to our minimization problem and let $\algo{OPT}$ be an optimum solution. If for some $\alpha \ge 1$, we can show that for all $j \in [\ell]$, the inequality $\omega_j(S) \le \alpha \cdot \omega_j(\algo{OPT})$ is fulfilled, then it follows that $\omega(S) \le \alpha \cdot \omega(\algo{OPT})$, which shows that $S$ is $\alpha$-approximate. This basic observation is a direct consequence of the linearity of the objective function. This result is also known as the Local Ratio Theorem. The key in Local Ratio algorithms is in good construction of the weight decomposition. One common approach is an iterative one, where the weight decomposition is constructed step by step, with other algorithmic decisions taking place in-between. It is often much easier to show approximation guarantees for the individual weight functions, as they are based on local structure or by construction fulfill certain strong properties. For example, in the following section on Data Migration, every individual weight function will be supported on the neighborhood of some vertex, which then enables strong local ratio bounds to be shown.

\subsection{Local Ratio in Data Migration}
We review the algorithm and most important analysis of the Local Ratio approach employed by Mestre \cite{mestre2010} to obtain a $(1+\phi)$-approximation for Data Migration. For full details and analysis we refer to their work.

In Data Migration, we are given a graph $G=(V,E)$ and weights $\omega_v \ge 0$ for $v \in V$. In each timestep, a set of edges which forms a matching in the graph can be scheduled. A vertex finishes if all of its adjacent edges have been scheduled. The goal is to minimize the weighted sum of completion times, meaning that we want to minimize $\sum_{v \in V}\omega_v C_v^*$, where $C_v^*$ is the finishing time of vertex $v \in V$. In terms of Graph Scheduling, this is equivalent to the groups consisting of exactly the edges adjacent to each vertex.

The authors present the \cref{alg:dmlr} for Data Migration. The idea is to assign every vertex some label and then schedule the edges greedily by their labels. Here, for some vertex $u \in V$, $\algo{UN}(u)$ is the set of unlabeled neighbors of the vertex $u$. The algorithm in this form is actually a class of algorithms, as it depends on a hitherto unspecified choice of weight vector in line $7$.

\begin{algorithm}[ht!]
    \DontPrintSemicolon
    \SetAlFnt{10pt}
	\textbf{Algorithm (\cite{mestre2010})} $\algo{DMLR}(V,E,\omega)$:\\
    // Labeling Stage\\
    \ForEach{$v \in V$}
    {
        $\ell_v \gets \underline{\phantom{-}}$
    }
    \Repeat{$\text{every vertex is labeled}$}
    {
        \text{choose } $u \in V$ \text{ maximizing } $\Delta = |\text{UN}(u)|$\\
        \text{choose } $\hat{\omega}$ \text{ with support in UN}$(u)$\\
        $\omega \gets \omega - \min\left\{\frac{\omega(u)}{\hat{\omega}(u)}\middle|\ \hat{\omega}(u)>0\right\}\hat{\omega}$\\
        \ForEach{$v \in \text{UN}(v)\ |\  \omega(v) = 0$}
        {
            $\ell_v \gets \Delta$
        }
    }
    // Scheduling Stage\\
    \text{sort } $(u,v) \in E$ \text{ in lexicographic order of } $\langle \min\{\ell_u,\ell_v\},\max\{\ell_u,\ell_v\} \rangle$\\
    $S \gets \text{ empty schedule}$\\
    \ForEach{$e \in E \text{ in sorted order}$}
    {
        add $e$ to $S$ as early as possible
    }
	\Return $S$
	\caption{The base local ratio algorithm for Data Migration as given in \cite{mestre2010}. The weight vector $\hat{\omega}$ in line $7$ is calculated through an LP, which we do not include in this pseudocode for easier readability.}
	\label{alg:dmlr}
\end{algorithm}

The vector $\hat{\omega}$ in line $7$ is chosen as the minimizer of the following LP, where $d_i$ is the degree sequence of the vertices in $\algo{UN}(u)$:

\begin{gather*}\tag{\textbf{LP M}}\label{lp:mestre_local_ratio}
\begin{aligned}
~&&&\min \quad\sum_{i \in [\Delta]}(d_i + \Delta - 1)\hat{\omega}_i\\
\sum_{i \in [\Delta]}(y_i - z_i) &&\ge& \quad 1 &&\\
y_i - z_j &&\le&\quad \max\{d_i,j\}\hat{\omega}_i && \forall i,j\in [\Delta]\\
y_i,z_i,\hat{\omega}_i &&\ge&\quad 0 &&\forall i \in [\Delta]
\end{aligned}
\end{gather*}
In their analysis, for a given degree sequence $d$ and a weight function $\omega$, they show upper and lower bounds on the weighted sum of completion times and define several constants based on these. The upper bound $\algo{UB}(d,\omega)$ is given as follows
\begin{equation*}
    \algo{UB}(d,\omega) \qeq \sum_{i \in [\Delta]} \omega_i (d_i + \Delta - 1),
\end{equation*}
while the lower bound is as follows:
\begin{equation*}
    \algo{LB}(d, \omega) \qeq \min_{\sigma:[\Delta] \rightarrow [\Delta]} \sum_{i \in [\Delta]} \omega_i \max\{\sigma(i),d_i\}
\end{equation*}
Here $\sigma:[\Delta] \rightarrow [\Delta]$ is the set of all permutations on $[\Delta]$. They define $\rho(d) := \inf_{\hat{\omega}} \frac{\algo{UB}(d,\hat{\omega})}{\algo{LB}(d,\hat{\omega})}$ and call this the minimum local ratio of $d$. The \cref{lp:mestre_local_ratio} aims to minimize the local ratio under the assumption $\algo{LB}(d,\omega) = 1$, which is valid as the ratio is scaling invariant. The worst case local ratio is defined as $\rho := \sup_d \rho(d)$, with the worst case ratio for degree sequences of length $\Delta$ being written as $\rho_{\Delta} := \max_{d:|d| = \Delta} \rho(d)$.

\subsection{Bounds on the Local Ratio}
In their work, the authors of \cite{mestre2010} show several results for the local ratio of \cref{alg:dmlr} for Data Migration, which leads to the $(1+\phi)$-approximation guarantee of the algorithm. In this section we review their results and subsequently provide more fine-grained guarantees by strengthening and extending their analysis. The following three results can be found in slightly modified form in \cite{mestre2010}.

\begin{lemma}[\cite{mestre2010}]\label{lemma_mestre_1}
For any $\Delta \in \mathbb{N}_{\ge 2}$, the value of $\rho_\Delta$ is upper bounded by $(1+\phi) + \frac{2}{\Delta-1}$.
\end{lemma}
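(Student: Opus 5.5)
The plan is to exhibit, for every degree sequence $d=(d_1,\dots,d_\Delta)$, an explicit feasible solution of \cref{lp:mestre_local_ratio} whose objective is at most $(1+\phi)+\frac{2}{\Delta-1}$. Since the LP normalizes $\algo{LB}(d,\hat\omega)=1$ and its objective equals $\algo{UB}(d,\hat\omega)$, this bounds $\rho(d)$ and hence $\rho_\Delta$. Feasibility is cleanest to check in the primal form. I will choose $\hat\omega$ directly and show
\begin{equation*}
\algo{UB}(d,\hat\omega)\le \big((1+\phi)+\tfrac{2}{\Delta-1}\big)\cdot\algo{LB}(d,\hat\omega),
\end{equation*}
where $\algo{LB}$ is the minimum-cost assignment $\min_\sigma\sum_i\hat\omega_i\max\{\sigma(i),d_i\}$. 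By LP duality this is exactly what the constraints on $y,z$ encode.

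First sort so that $d_1\le\dots\le d_\Delta$; note $d_i\ge 1$. Also $d_i\le\Delta$, because $u$ was chosen maximizing the number of unlabeled neighbors, so every unlabeled vertex has at most $\Delta$ unlabeled neighbors. (I will take Mestre's convention that $d_i$ counts unlabeled neighbors.) Then choose a weighting that penalizes high-degree vertices geometrically, e.g.\ $\hat\omega_i\propto 1/(d_i+\Delta-1)$ truncated by a threshold. A natural candidate is to put weight only on vertices with $d_i\le \Delta/\phi$, or to set $\hat\omega_i=1$ for the $\lceil\Delta/\phi\rceil$ smallest-degree vertices.

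The lower bound is easy to estimate with this choice. Any permutation places the weighted vertices on distinct slots, so
\begin{equation*}
\algo{LB}\ \ge\ \sum_{t=1}^{m}\max\{t,d_{(t)}\}\ \ge\ \tfrac{m(m+1)}{2},
\end{equation*}
where $m$ is the number of weighted vertices. The upper bound is $\sum_{i\le m}(d_i+\Delta-1)$.

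The core step is a two-case analysis according to whether the small degrees $d_i$ dominate $\sum_{i\le m}\max\{t,d_i\}$ or the slot positions do. If the degrees dominate, then $\sum d_i\le\algo{LB}$ and $(\Delta-1)m\le \phi\,\algo{LB}$ follow from the choice of $m$. If the slots dominate, $m\approx\Delta/\phi$ gives $m(\Delta-1)/(m(m+1)/2)\approx 2\phi^2/\ldots$. Balancing the threshold at the golden ratio ($m/\Delta=1/\phi$, using $\phi^2=\phi+1$) yields ratio $1+\phi$. The rounding of $m=\lceil \Delta/\phi\rceil$ and the $-1$ in $\Delta-1$ produce the error term. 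I expect this to be the main obstacle: controlling the rounding losses precisely enough to get exactly $\frac{2}{\Delta-1}$ rather than some weaker $O(1/\Delta)$. I would first try bounding $\frac{m(\Delta-1)}{m(m+1)/2}$ and $\frac{\Delta-1}{m}$ using $\Delta/\phi\le m<\Delta/\phi+1$, which gives additive errors of order $\frac{2}{\Delta-1}$. If the simple threshold weighting fails, I would fall back to Mestre's original fractional weighting and reproduce his computation with exact constants.
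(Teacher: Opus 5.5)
The paper does not prove this lemma; it quotes it from Mestre. Your framework is the right one: for each $d$, exhibit some $\hat\omega$ with $\algo{UB}(d,\hat\omega)\le\big((1+\phi)+\tfrac{2}{\Delta-1}\big)\algo{LB}(d,\hat\omega)$. However, neither weighting you propose achieves this, and the two-case estimate does not hold up. For the fixed prefix ($\hat\omega_i=1$ on the $m=\lceil\Delta/\phi\rceil$ smallest degrees) take $d_1=\dots=d_\Delta=1$. Then $\algo{UB}=m\Delta$ and $\algo{LB}=m(m+1)/2$, so the ratio is $2\Delta/(m+1)\to 2\phi\approx 3.24$. For instance, $\Delta=100$ gives $200/63\approx 3.17$ against the target $1+\phi+\tfrac{2}{99}\approx 2.64$. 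This is exactly your ``slots dominate'' term $\tfrac{2(\Delta-1)}{m+1}\approx 2\phi$, which alone already exceeds $1+\phi$, so balancing at $m/\Delta=1/\phi$ cannot work. In the other case, $(\Delta-1)m\le\phi\,\algo{LB}$ does not follow from the choice of $m$. The threshold variant (uniform weight on all $i$ with $d_i\le\Delta/\phi$) also fails. For $\Delta=100$ and $d_i=\min\{i,60\}$ every vertex is weighted, $\algo{UB}=4230+9900=14130$ and $\algo{LB}=\sum_t t=5050$, giving ratio $\approx 2.80$. In fact the threshold points the other way. If some $d_i\ge\Delta/\phi$, putting all weight on that single vertex gives ratio $1+(\Delta-1)/d_i<1+\phi$. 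So large degrees are the easy case, and the whole difficulty lies in sequences with all $d_i<\Delta/\phi$.

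For those sequences, no rule of the form ``uniform weight on a $d$-dependent set $S$'' suffices. Take $d_i=\min\{i,\lfloor 0.6\Delta\rfloor\}$ and $|S|\le\lfloor 0.6\Delta\rfloor$. Sorted pairing puts every slot at or below the corresponding degree, so the ratio is $1+|S|(\Delta-1)/\sum_{i\in S}d_i\ge 1+\tfrac{\Delta-1}{0.6\Delta}$. One checks that larger $S$ does no better, so every $0/1$ weighting gives at least about $2.67$ for large $\Delta$. By contrast, weight $1$ on the vertices of degree $\lfloor 0.6\Delta\rfloor$ and weight about $\tfrac14$ on the others gives about $2.61<1+\phi$ in the limit. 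So the weights must be genuinely fractional and tailored to $d$. The entire content of the lemma is constructing them, proving the $\algo{UB}$/$\algo{LB}$ comparison for every $d$, and controlling the discretization so that exactly $\tfrac{2}{\Delta-1}$ comes out. Your last sentence defers precisely this step to Mestre's computation, so the proposal does not yet contain a proof.
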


\begin{lemma}[\cite{mestre2010}]\label{lemma_mestre_2}
For any $\Delta \in \mathbb{N}_+$, there holds $\rho_\Delta < \rho_{2\Delta}$.
\end{lemma}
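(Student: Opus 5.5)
The plan is to prove the statement by a doubling construction on degree sequences. Take a degree sequence $d$ of length $\Delta$ with $\rho(d) = \rho_\Delta$; the paper defines $\rho_\Delta$ as a maximum, so I assume it is attained. From it build a sequence $d'$ of length $2\Delta$ by duplicating and doubling every entry: $d'_{2i-1} = d'_{2i} = 2d_i$ for $i \in [\Delta]$. It then suffices to show $\rho(d') > \rho(d)$, since $\rho_{2\Delta} \ge \rho(d')$.

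Fix any weight vector $\omega'$ for $d'$ and fold it onto $d$ by setting $\omega_i := \omega'_{2i-1} + \omega'_{2i}$. I will compare $\algo{UB}$ and $\algo{LB}$ for $(d',\omega')$ with those for $(d,\omega)$.

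For the upper bound, a direct computation gives
\begin{equation*}
\algo{UB}(d',\omega') \heq \sum_{i} \omega_i (2d_i + 2\Delta - 1) \heq 2\,\algo{UB}(d,\omega) \hop{+} \omega([\Delta]).
\end{equation*}

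For the lower bound, let $\sigma$ be an optimal permutation for $\algo{LB}(d,\omega)$. Define $\sigma'(2i-1) := 2\sigma(i)-1$ and $\sigma'(2i) := 2\sigma(i)$, which is a permutation of $[2\Delta]$. Each term satisfies $\max\{\sigma'(\cdot), 2d_i\} \le 2\max\{\sigma(i), d_i\}$. Summing gives $\algo{LB}(d',\omega') \le 2\,\algo{LB}(d,\omega)$.

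Combining the two bounds,
\begin{equation*}
\frac{\algo{UB}(d',\omega')}{\algo{LB}(d',\omega')} \hge \frac{\algo{UB}(d,\omega)}{\algo{LB}(d,\omega)} \hop{+} \frac{\omega([\Delta])}{2\,\algo{LB}(d,\omega)} \hge \rho(d) \hop{+} \frac{\omega([\Delta])}{2\,\algo{LB}(d,\omega)}.
\end{equation*}

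The main obstacle is that $\rho(d')$ is an infimum over $\omega'$, so I need the extra term to be bounded away from zero uniformly in $\omega$. I will handle this by noting $\algo{LB}(d,\omega) \le \max\{\Delta, \max_i d_i\}\cdot \omega([\Delta])$, since every term $\max\{\sigma(i), d_i\}$ is at most $M := \max\{\Delta,\max_i d_i\}$. Hence the extra term is at least $\frac{1}{2M}$, a positive constant depending only on the fixed $d$. Taking the infimum over $\omega'$ then gives $\rho_{2\Delta} \ge \rho(d') \ge \rho_\Delta + \frac{1}{2M} > \rho_\Delta$.

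Two details remain to check. First, $d'$ must be an admissible degree sequence; since only positive integer entries are needed, doubling preserves this. Second, the degenerate case $\omega = 0$ does not arise, because the ratio is scale invariant and the weights can be normalized with $\omega \ne 0$.
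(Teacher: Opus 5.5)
Your proof is correct and is essentially the paper's own argument. In the proof of \cref{lemma_local_ratio_bound}, the paper sharpens this lemma by the same construction, for general $k$: it replaces each $d_i$ by $k$ copies of $kd_i$, folds the weights for $d'$ back onto $d$, and picks up the additive term $\frac{k-1}{k}\sum_i \omega_i$, which plays the role of your $\omega([\Delta])$ term.

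The only difference is presentational. You verify $\algo{LB}(d',\omega') \le 2\,\algo{LB}(d,\omega)$ with the primal block permutation $\sigma'$, whereas the paper checks feasibility of the folded dual variables $(y,z)$ in \cref{lp:mestre_local_ratio}, which encodes the same blockwise pairing. One small correction to your admissibility remark: admissible sequences need $d_i \in [\Delta]$, not just positive integer entries. Doubling preserves this, so your $M$ is simply $\Delta$.
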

\noindent Combining these two results leads to the $(1+\phi)$-approximation guarantee.

\begin{lemma}[\cite{mestre2010}]\label{lemma_mestre_3}
The local ratio $\rho$ is upper bounded by $(1+\phi)$.
\end{lemma}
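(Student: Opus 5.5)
The plan is to combine \cref{lemma_mestre_1} and \cref{lemma_mestre_2} directly; no new estimate is needed. First we note that $\rho = \sup_d \rho(d) = \sup_{\Delta \ge 1} \rho_\Delta$, since every degree sequence has some length $\Delta$. It therefore suffices to show $\rho_\Delta \le 1+\phi$ for every fixed $\Delta \in \mathbb{N}_+$.

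Fix $\Delta$. Applying \cref{lemma_mestre_2} repeatedly gives a strictly increasing chain
\begin{equation*}
\rho_\Delta \hop{<} \rho_{2\Delta} \hop{<} \rho_{4\Delta} \hop{<} \cdots \hop{<} \rho_{2^m\Delta}
\end{equation*}
for every $m \in \mathbb{N}$. For $m \ge 1$ we have $2^m\Delta \ge 2$, so \cref{lemma_mestre_1} applies and yields
\begin{equation*}
\rho_\Delta \hop{<} \rho_{2^m\Delta} \hop{\le} (1+\phi) + \frac{2}{2^m\Delta - 1}.
\end{equation*}
Letting $m \to \infty$ sends the additive term to $0$, so $\rho_\Delta \le 1+\phi$. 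Taking the supremum over $\Delta$ then gives $\rho \le 1+\phi$. This also covers $\Delta = 1$, which \cref{lemma_mestre_1} does not handle directly, because that case is routed through $\rho_2$.

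The main point to check is that $\rho_\Delta$ is well defined as a maximum over sequences of length $\Delta$, or at least as a finite supremum. The degrees $d_i$ are unbounded, so finiteness is not automatic. \cref{lemma_mestre_1} already supplies a finite upper bound, which is enough for the comparison in \cref{lemma_mestre_2} to make sense. Beyond that the argument is only a limit. If one also wants the algorithmic conclusion, one invokes the Local Ratio Theorem: each weight function $\hat\omega$ chosen in \cref{alg:dmlr} achieves local ratio $\rho(d) \le \rho \le 1+\phi$.
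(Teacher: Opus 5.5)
Your argument is correct and is the same combination the paper points to when it says \cref{lemma_mestre_1} and \cref{lemma_mestre_2} together give the bound: you iterate the doubling inequality to get $\rho_\Delta < \rho_{2^m\Delta} \le (1+\phi) + \frac{2}{2^m\Delta-1}$ and let $m\to\infty$. Your treatment of $\Delta=1$ via $\rho_2$ and your remark on finiteness of $\rho_\Delta$ are details the paper leaves implicit, and both are handled soundly.
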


Keeping in line with the idea behind Local Ratio, we show that this bound can be additively improved on a local scale, which then directly lifts to a general additive improvement.

\begin{lemma}\label{lemma_local_ratio_bound}
Let $\omega^{\algo{ALG}}$ be the cost function derived in step $7$ in \cref{alg:dmlr} and let $S$ be any schedule and $\algo{OPT}$ be an optimal schedule. Then we can bound the local ratio in the following way:
\begin{equation*}
    \omega^{\algo{ALG}}(S) \qle (1 + \phi)\cdot\omega^{\algo{ALG}}(\algo{OPT}) \hop{-} \mfrac{1}{2} \sum_{j \in [\Delta]}\omega^{\algo{ALG}}_j
\end{equation*}
\end{lemma}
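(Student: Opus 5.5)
The plan is to work entirely on the local level of one iteration of the labeling stage: vertex $u$ is chosen, $\Delta = |\algo{UN}(u)|$, and the degree sequence $d_1 \le \dots \le d_\Delta$ belongs to the unlabeled neighbors. The weight function $\omega^{\algo{ALG}}$ is the scaled minimizer $\hat{\omega}$ of \cref{lp:mestre_local_ratio}, supported on $\algo{UN}(u)$. From the analysis of \cite{mestre2010} I will take two facts as given:
\begin{itemize}
\item[i)] the schedule $S$ produced by the greedy scheduling stage satisfies $\omega^{\algo{ALG}}(S) \le \algo{UB}(d,\omega^{\algo{ALG}}) = \sum_i \omega^{\algo{ALG}}_i(d_i+\Delta-1)$;
\item[ii)] any schedule, in particular $\algo{OPT}$, satisfies $\omega^{\algo{ALG}}(\algo{OPT}) \ge \algo{LB}(d,\omega^{\algo{ALG}})$.
\end{itemize}
Fact ii) holds because the $\Delta$ edges to $u$ must occupy distinct time slots, and vertex $i$ needs at least $d_i$ slots. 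The claim therefore reduces to the purely local inequality
\[
\algo{UB}(d,\hat\omega) \hle (1+\phi)\,\algo{LB}(d,\hat\omega) \hop{-} \tfrac12\textstyle\sum_i \hat\omega_i
\]
for the LP-optimal $\hat\omega$, normalized so that $\algo{LB}=1$.

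The key steps are as follows.

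First, rewrite $\algo{UB}-\tfrac12\sum_i\hat\omega_i = \sum_i\hat\omega_i(d_i+\Delta-\tfrac32)$. I then compare this with the shifted upper bound against which Mestre's ratio bounds are actually proven. In \cite{mestre2010}, the bound $\rho_\Delta \le (1+\phi)+\frac{2}{\Delta-1}$ from \cref{lemma_mestre_1} is obtained by exhibiting an explicit feasible $\hat\omega$ and bounding its cost. I will redo that computation with the objective coefficient $d_i+\Delta-1$ replaced by $d_i+\Delta-\tfrac32$. Because the objective is linear in $\hat\omega$, this amounts to subtracting $\tfrac12\sum_i\hat\omega_i$ from the value certified by Mestre's construction.

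Second, handle small and large $\Delta$ separately, mirroring \cref{lemma_mestre_1,lemma_mestre_2,lemma_mestre_3}.
\begin{itemize}
\item[a)] For $\Delta=1$, the LP forces $\hat\omega_1 = 1/d_1$ (or $1/\max\{d_1,1\}$). The upper bound is $\hat\omega_1 d_1 = 1$, and the required inequality becomes $1 \le (1+\phi) - \tfrac{1}{2d_1}$, which is trivial. Small $\Delta$ such as $\Delta=2$ can be checked similarly by hand.
\item[b)] For general $\Delta$, use the scaling argument behind \cref{lemma_mestre_2}: replicating each neighbor yields a degree sequence of length $2\Delta$ with a worse ratio. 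I will verify that the subtractive term transforms compatibly. Under replication each weight is halved while the count is doubled, so $\sum\hat\omega$ is preserved, and the additive slack persists in the limit. The supremum argument of \cref{lemma_mestre_3} then gives the shifted bound for all $\Delta$.
\end{itemize}

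Third, lift the local bound via the Local Ratio Theorem, using linearity over the iterations and $\sum_j\omega_j$ being the sum of all local weights. This last step belongs to \cref{lemma:local_ratio_bound}, not to this lemma, so here I only need the local inequality.

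The main obstacle is the second step. The bound of \cref{lemma_mestre_1} carries a $+\frac{2}{\Delta-1}$ error, so for small $\Delta$ one cannot simply subtract $\tfrac12\sum\hat\omega$ from a bound that is already not $(1+\phi)$. One must show that the worst case of $\rho_\Delta$ is approached only as $\Delta\to\infty$, where $\sum\hat\omega$ relative to $\algo{LB}$ tends to a quantity that still leaves room. My first attempt is to use that $\algo{LB}\ge\sum_i\hat\omega_i\max\{i,d_i\}\ge\sum_i\hat\omega_i\, i$, which shows $\sum\hat\omega$ is small compared with $\algo{LB}$ for large $\Delta$. I then hope to prove directly from the dual of \cref{lp:mestre_local_ratio} that the gap $(1+\phi)\algo{LB}-\algo{UB}$ is at least $\tfrac12\sum\hat\omega$. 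This should follow by tracking the $-1$ in $d_i+\Delta-1$, compared with the continuous relaxation used to derive $1+\phi$, which loses about one unit of $\hat\omega_i$ per vertex.
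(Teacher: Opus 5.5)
Your reduction to a local inequality for the LP-optimal weights matches the final chain of the paper's proof. However, the proposal never establishes that inequality. Step~1 cannot work, because \cref{lemma_mestre_1} only certifies a value above $1+\phi$. Your last paragraph ends with a hope about the dual of \cref{lp:mestre_local_ratio}.

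The missing idea is where the additive slack comes from. In the paper, $d'$ consists of $k$ copies of $kd_i$ for each $i$. An optimal $(\omega',y',z')$ for $d'$ is aggregated via $\omega_i=k\sum_{j=ik-k+1}^{ik}\omega'_j$ (and blockwise sums for $y,z$) into a feasible point for $d$ with
\[
\sum_{i\in[\Delta]}(d_i+\Delta-1)\,\omega_i=\sum_{j\in[k\Delta]}(d'_j+k\Delta-k)\,\omega'_j=\rho(d')-(k-1)\sum_{j\in[k\Delta]}\omega'_j=\rho(d')-\tfrac{k-1}{k}\sum_{i\in[\Delta]}\omega_i .
\]
So the slack is created by the $-1$ in the objective coefficient, which scales to $-k$ but is only $-1$ for $d'$. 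Together with $\rho(d')\le 1+\phi$ from \cref{lemma_mestre_3}, no case analysis over $\Delta$ is needed, and \cref{lemma_mestre_1} enters only through \cref{lemma_mestre_3}. Your description of replication is also wrong on this point. Since the degrees are multiplied as well, $\sum_j\omega'_j=\frac1k\sum_i\omega_i$. The weight sum is therefore not preserved, and no slack can simply be ``carried along''; it must be generated by this coefficient mismatch.

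Even granting this, a second step is missing. The aggregated $\omega$ is just some feasible point, whereas $\omega^{\algo{ALG}}$ is a multiple of the LP optimum $\omega^{\algo{LP}}$. A bound on $\sum_i(d_i+\Delta-\frac32)\omega_i$, or on the minimum of your shifted objective, says nothing about $\omega^{\algo{LP}}$, whose weight sum may be larger. The paper bridges this using the optimality of $\omega^{\algo{LP}}$ and the coefficient range $d_i+\Delta-1\in[\Delta,2\Delta-1]$:
\[
\Delta\sum_{i\in[\Delta]}\omega^{\algo{LP}}_i\le\rho(d)\le\sum_{i\in[\Delta]}(d_i+\Delta-1)\,\omega_i\le(2\Delta-1)\sum_{i\in[\Delta]}\omega_i .
\]
Hence $\sum_i\omega_i\ge\frac{\Delta}{2\Delta-1}\sum_i\omega^{\algo{LP}}_i$, and letting $k\to\infty$ gives $\rho(d)\le(1+\phi)-\frac{\Delta}{2\Delta-1}\sum_i\omega^{\algo{LP}}_i$. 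This conversion is exactly where the constant $\frac12<\frac{\Delta}{2\Delta-1}$ in the statement comes from. Rescaling by $\alpha$, together with $\algo{LB}(d,\omega^{\algo{LP}})\ge 1$ and $\algo{LB}(d,\omega^{\algo{ALG}})\le\omega^{\algo{ALG}}(\algo{OPT})$, then finishes the proof as in your reduction.
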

\begin{proof}
We start by deriving a sharper version of \cref{lemma_mestre_2} using their proof as a base. Let $d$ be a degree sequence of length $\Delta$ such that $\rho_\Delta = \rho(d)$ and let $k \in \mathbb{N}_{\ge 2}$. We define a new degree sequence $d'$ of length $k\Delta$ which we use to derive a first bound. For this purpose, replace any element in $d$ by $k$ copies of it which are multiplied by $k$ to obtain $d'$, i.e. the first $k$ elements in $d'$ have value $kd_1$, the next $k$ value $kd_2$ and so on. Solve the \cref{lp:mestre_local_ratio} for $d'$ to obtain the values $(\omega',y',z')$. For $i \in [\Delta]$, define $y_i := \sum_{j \in [i\cdot k-(k-1),...,i\cdot k]}y'_j$, $z_i := \sum_{j \in [i\cdot k-(k-1),...,i\cdot k]}z'_j$, and $\omega_i = k \cdot \sum_{j \in [i\cdot k-(k-1),...,i\cdot k]}\omega'_j$. Before we show that $(\omega,y,z)$ is feasible for \cref{lp:mestre_local_ratio} on $d$, we derive the following bound:
\begin{align*}
\quad\sum_{i \in [\Delta]} (d_i + \Delta - 1)\omega_i &\qeq\sum_{i \in [\Delta]}(d_i + \Delta - 1)\cdot k \cdot \sum_{j \in [i\cdot k-(k-1),...,i\cdot k]}\omega'_j\\
~&\qeq\sum_{i \in [k\Delta]}(kd_i + k\Delta - k)\cdot\omega'_i\\
~&\qeq\sum_{i \in [\Delta']}(d_i' + \Delta' - k)\cdot \omega_i'\\
~&\qeq\sum_{i \in [\Delta']}(d_i' + \Delta' - 1)\cdot \omega'_i \hop{-} (k-1)\sum_{i \in [\Delta']}\omega_i'\\
~&\qeq\sum_{i \in [\Delta']}(d_i' + \Delta' - 1)\cdot \omega'_i\hop{-} \mfrac{k-1}{k}\sum_{i \in [\Delta]}\omega_i\\
~&\qle\rho_{k\Delta} \hop{-} \mfrac{k-1}{k}\sum_{i \in [\Delta]}\omega_i
\end{align*}
From the choice of $d$ and the bound from \cref{lemma_mestre_3} this implies
\begin{equation*}
    \rho_\Delta \qle \rho_{k\Delta} \hop{-} \mfrac{k-1}{k}\sum_{i \in [\Delta]}\omega_i \qle (1+\phi) \hop{-} \mfrac{k-1}{k}\sum_{i \in [\Delta]}\omega_i.
\end{equation*}
As $k$ can be chosen arbitrarily large, it in fact follows that we can drop the fraction $\frac{k-1}{k}$ to obtain
\begin{equation*}
    \rho_\Delta \qle (1+\phi) \hop{-} \sum_{i \in [\Delta]}\omega_i.
\end{equation*}
The $\omega_i$ in the previous bound is the weight vector derived from the optimal solution to \cref{lp:mestre_local_ratio} for degree sequence $d'$. For our purposes we require a bound against $\omega^{\algo{LP}}$, the optimal solution for degree sequence $d$. We have:
\begin{align*}
    1 &\qge \frac{\sum_{i \in [\Delta]}(d_i + \Delta - 1)\omega_i^{\algo{LP}}}{\sum_{i \in [\Delta]}(d_i + \Delta-1)\omega_i} \qeq \frac{\sum_{i \in [\Delta]}(\frac{d_i + \Delta - 1}{2\Delta})\omega_i^{\algo{LP}}}{\sum_{i \in [\Delta]}(\frac{d_i + \Delta-1}{2\Delta})\omega_i} \\
    ~&\qge \frac{\frac{1}{2} \cdot \sum _{i \in [\Delta]}\omega_i^{\algo{LP}}}{\frac{2\Delta-1}{2\Delta}\cdot \sum_{i \in [\Delta]} \omega_i} \qeq \frac{\Delta}{2\Delta-1} \cdot \frac{\sum_{i \in [\Delta]} \omega_i^{\algo{LP}}}{\sum_{i \in [\Delta]} \omega_i}
\end{align*}
In the third step we use the fact that all $d_i$ are in $[\Delta]$, so $\frac{d_i+\Delta-1}{2\Delta} \in \left[\frac{1}{2},\frac{2\Delta-1}{2\Delta}\right]$.
Thus by rearranging we obtain $\sum_{i \in [\Delta]} \omega_i \Hquad \ge \Hquad \frac{\Delta}{2\Delta-1}\sum_{i \in [\Delta]} \omega_i^{\algo{LP}}$.\\

\noindent We now show that the triple $(\omega,y,z)$ is feasible for \cref{lp:mestre_local_ratio}. We have
\begin{equation*}
    \sum_{i \in [k\Delta]}(y'_i - z'_i) \hge 1 \quad \Leftrightarrow \quad \sum_{i \in [\Delta]}(y_i - z_i) \hge 1
\end{equation*}
by simple combination of the terms, so the first set of constraints in \cref{lp:mestre_local_ratio} is fulfilled. For any $\ell \in [k]$ and $i,j \in \{\ell \cdot k - (k-1),\dotsc,\ell\cdot k\}$ we know that 
\begin{equation*}
    y'_i - z'_j \hle \max\{d'_i,j\}\omega'_i \heq \max\{kd_\ell,j\} \omega'_i
\end{equation*}
holds. By summing these inequalities over all values in the index set, we obtain:
\begin{equation*}
    y_i - z_j \heq\sum_{\kappa \in \{0,\dotsc,k-1\}} (y'_{i\cdot k - \kappa} - z'_{j\cdot k - \kappa}) \hle \sum_{\kappa \in \{0,\cdots,k-1\}} \max\{kd_{i},j\cdot k - \kappa\}\omega'_{i\cdot k - \kappa} \hle \max\{d_i,j\}\omega_i
\end{equation*}
So the second set of constraints is also fulfilled, which shows that $(\omega,y,z)$ is feasible.\\

We proceed by bounding the actual cost of any schedule in terms of the algorithm's objective function. Let $\omega^{\algo{LP}}$ be the optimal LP solution calculated in the \cref{alg:dmlr} and let $\omega^{\algo{ALG}}$ be the scaled weight function used by the algorithm. Let $\alpha > 0$ be such that $\omega^{\algo{LP}} \cdot \alpha = \omega^{\algo{ALG}}$. We can derive the following relation for the lower bound $\algo{LB}(d,\omega^{\algo{LP}})$:
\begin{align*}
    \algo{LB}(d,\omega^{\algo{LP}}) &\heq \max_{\sigma: [\Delta] \rightarrow [\Delta]} \sum_{i \in [\Delta]}\omega^{\algo{LP}}_i \max\{d_i, \sigma(i)\}\\ ~&\heq \max_{\sigma: [\Delta] \rightarrow [\Delta]} \sum_{i \in [\Delta]}\mfrac{1}{\alpha}\cdot\omega^{\algo{ALG}}_i \max\{d_i, \sigma(i)\} \heq \mfrac{1}{\alpha}\algo{LB}(d,\omega^{\algo{ALG}})
\end{align*}
Using the previous two inequalities, the cost $\omega(S)$ can be bounded in the following way:
\begin{align*}
\omega^{\algo{ALG}}(S) \qquad &\le \qquad \algo{UB}(d,\omega^{\algo{ALG}}) \quad = \quad \frac{\algo{UB}(d,\omega^{\algo{ALG}})}{\algo{LB}(d,\omega^{\algo{ALG}})} \cdot \algo{LB}(d,\omega^{\algo{ALG}}) \quad = \quad \rho(d) \cdot \algo{LB}(d,\omega^{\algo{ALG}}) \qquad \\
&\le \qquad \Big(\rho_\infty - \mfrac{\Delta}{2\Delta-1}\sum \omega^{\algo{LP}}_i\Big) \cdot \algo{LB}(d,\omega^{\algo{ALG}})\\
~&\le\qquad(1+\phi)\algo{LB}(d,\omega^{\algo{ALG}}) \hop{-}  \left(\mfrac{\Delta}{2\Delta-1}\sum \mfrac{1}{\alpha}\omega^{\algo{ALG}}_i\right) \cdot \alpha \cdot \algo{LB}(d,\omega^{\algo{LP}}) \\
~&\stackrel{\mathclap{\algo{LB}(d,\omega^{\algo{LP}}) \ge 1}}{\le}\qquad (1+\phi)\algo{LB}(d,\omega^{\algo{ALG}}) \hop{-} \mfrac{\Delta}{2\Delta-1}\sum \omega^{\algo{ALG}}_i
\end{align*}
In the third step we use that $\omega^{\algo{ALG}}$ is a scalar multiple of $\omega^{\algo{LP}}$ and thus $\frac{\algo{UB}(d,\omega^{\algo{ALG}})}{\algo{LB}(d,\omega^{\algo{ALG}})} = \frac{\algo{UB}(d,\omega^{\algo{LP}})}{\algo{LB}(d,\omega^{\algo{LP}})}$ and as the algorithm chooses the $\omega^{\algo{LP}}$ which minimizes this ratio, we have $\frac{\algo{UB}(d,\omega^{\algo{LP}})}{\algo{LB}(d,\omega^{\algo{LP}})} = \inf_{\hat{\omega}} \frac{\algo{UB}(d,\hat{\omega})}{\algo{LB}(d,\hat{\omega})} = \rho(d)$. The claimed bound follows from the observations $\algo{LB}(d,\omega^{\algo{ALG}}) \le \omega^{\algo{ALG}}(\algo{OPT})$ and $\frac{\Delta}{2\Delta-1} > \frac{1}{2}$. 

\end{proof}
From the Local Ratio Theorem and \cref{lemma_local_ratio_bound} it follows directly that for any weight vector $\omega$, the \cref{alg:dmlr} returns a solution which also fulfills a bound of this general form:
\localratioimprov*

\end{document}